\documentclass[conference]{IEEEtran}
\IEEEoverridecommandlockouts

\usepackage{cite}
\usepackage{multirow}
\usepackage{physics}
\usepackage{algorithm}
\usepackage{float}
\usepackage{algpseudocode}
\usepackage{amsmath}
\usepackage{amssymb}
\usepackage{amsthm}
\usepackage{graphicx}
\usepackage[labelformat=simple]{subcaption}
\usepackage{xcolor}
\usepackage{booktabs}
\usepackage{tablefootnote}
\usepackage{array}
\usepackage{tabularx}
\usepackage{environ}
\usepackage{bbding}
\usepackage{tikz}
\usepackage{pgfplots}
\usepackage{xurl}
\usepackage[hidelinks]{hyperref}
\usepackage{qcircuit}
\usepackage[normalem]{ulem}
\pgfplotsset{compat=1.18}
\usetikzlibrary{positioning,arrows.meta,patterns}
\usetikzlibrary{shapes.geometric,matrix,datavisualization,shapes.arrows,calc,backgrounds}

\newcommand{\revisioncolor}[1]{}
\newcommand{\revisionfillred}[1]{white}

\newif\ifmarkrev \markrevfalse
\ifmarkrev
  \newcommand{\rB}[1]{{\color{blue}#1}}       
  \newcommand{\rG}[1]{{\color{green!55!black}#1}} 
  \newcommand{\rR}[1]{{\color{red}#1}}         
  \newcommand{\rY}[1]{{\color{orange}#1}}      
  \colorlet{rkR}{red}\colorlet{rkB}{blue} 
\else
  \newcommand{\rB}[1]{#1}\newcommand{\rG}[1]{#1}\newcommand{\rR}[1]{#1}\newcommand{\rY}[1]{#1}
  \colorlet{rkR}{black}\colorlet{rkB}{black}
\fi

\newif\ifshowcodeurl
\showcodeurltrue
\newif\ifincludeappendix
\includeappendixtrue 

\title{Quantum Computing and Data Processing for Frequent Itemset Mining}

\author{%
\IEEEauthorblockN{Yen-Hsin Hsu\IEEEauthorrefmark{1},
Ya-Wen Teng\IEEEauthorrefmark{2},
De-Nian Yang\IEEEauthorrefmark{3},
Wang-Chien Lee\IEEEauthorrefmark{4},
Philip S. Yu\IEEEauthorrefmark{5}, and
Ming-Syan Chen\IEEEauthorrefmark{1}}
\IEEEauthorblockA{\IEEEauthorrefmark{1}National Taiwan University, Taiwan; b05203047@csie.ntu.edu.tw, mschen@ntu.edu.tw}
\IEEEauthorblockA{\IEEEauthorrefmark{2}National Chengchi University, Taiwan; ywteng@nccu.edu.tw}
\IEEEauthorblockA{\IEEEauthorrefmark{3}Academia Sinica, Taiwan; dnyang@iis.sinica.edu.tw}
\IEEEauthorblockA{\IEEEauthorrefmark{4}The Pennsylvania State University, USA; wul2@psu.edu}
\IEEEauthorblockA{\IEEEauthorrefmark{5}University of Illinois Chicago, USA; psyu@uic.edu}
}

\makeatletter
\newcommand{\problemtitle}[1]{\gdef\@problemtitle{#1}}
\newcommand{\probleminput}[1]{\gdef\@probleminput{#1}}
\newcommand{\problemquestion}[1]{\gdef\@problemquestion{#1}}

\makeatother

\NewEnviron{problem}[1]{%
\begin{center}\fbox{\parbox{3in}{%
    {\centering\scshape #1\par}%
    \parskip=1ex
    \everypar{\hangindent=1em}%
    \BODY
}}\end{center}}

\newtheorem{theorem}{Theorem}[section]
\theoremstyle{definition}
\newtheorem{definition}{Definition}
\newtheorem{corollary}{Corollary}[section]
\newtheorem{lemma}[theorem]{Lemma}
\newtheorem{example}{Example}
\newcommand\xqed[1]{\leavevmode\unskip\penalty9999 \hbox{}\nobreak\hfill\quad\hbox{#1}}
\newcommand\exend{\xqed{$\blacksquare$}}

\newcommand{\fullalgo}{Quantum Frequent-itemset Mining}
\newcommand{\onlineref}[1]{\ifincludeappendix Appendix~\ref{#1}\else\cite{online}\fi}

\begin{document}

\maketitle

\begin{abstract}
Frequent Itemset Mining (FIM) is an important task in data analytics, where classical algorithms \rB{face} scalability bottlenecks \rB{from} the combinatorial growth of candidates and \rB{the} memory overhead \rB{of their data structures}. Inspired by recent development\rB{s} in quantum computing, in this paper, we propose the \textit{\fullalgo\ (QFM)} data-processing framework for FIM. Following \rB{the} level-wise structure of the itemset lattice, QFM introduces three mechanisms: (1)~\textit{Bit-Vector Qubit Encoding} for quantum data representation, which organizes transaction data into branchless bit-vectors to facilitate systematic uncomputation; (2)~\textit{Mining-Aware Candidate Superposition}, which prepares a quantum superposition over valid candidates \rB{at each lattice level} rather than the full itemset lattice; and (3)~\textit{Bit-Parallel Threshold Marking}, which constructs a logarithmic-depth threshold-marking oracle for reliable repeated support verification within hardware coherence limits. We provide theoretical \rB{time complexity analysis}, implement QFM on IBM Qiskit and Amazon Braket, and evaluate it on real-world datasets against representative classical baselines, where QFM achieves 96\% improvement on average.

\end{abstract}


\section{Introduction}
\label{sec:intro}

Quantum computing has transitioned from early theoretical conjectures to a rapidly advancing physical reality. 
While Feynman~\cite{feynman1982simulating} and Deutsch~\cite{deutsch1985quantum} laid the groundwork for quantum computing decades ago, many people remain skeptical. Nevertheless, recent breakthroughs in hardware engineering~\cite{ibmroadmap2024,wright2019ionq,madsen2022borealis}, e.g., Google's 2019 quantum supremacy demonstration with its 53-qubit \textit{Sycamore} processor~\cite{arute2019quantum}, have shifted the focus from theoretical debate to real-world implementation.
Many industrial advances further \rB{suggest steady progress}. In 2024, IBM achieved hardware throughputs exceeding hundreds of thousands of operations per second~\cite{ibm_quantum_hardware}, while Microsoft \rB{reported progress toward} reliable, error-corrected qubits~\cite{microsoft_atom_computing_2024}. Furthermore, Google's late-2024 \textit{Willow} chip \rB{reports that a scaled-up chip can} complete complex tasks in minutes that would cost classical supercomputers septillions of years~\cite{google_willow_quantum_chip_2024}. 
The extraordinary potential of this paradigm shift stems from quantum \textit{superposition}, which allows an $n$-qubit register to simultaneously represent a state space of $2^n$ possibilities.\footnote{A qubit (quantum bit) is the fundamental unit of quantum information. Unlike a classical bit that stores a deterministic value of exactly 0 or 1, a qubit can maintain a probability distribution over both 0 and 1 simultaneously until it is explicitly measured.}
By amplifying the probability of desired outcomes~\cite{biamonte2017quantum}, quantum algorithms can effectively bypass exhaustive sequential search to achieve massive parallel processing. This property provides a critical opportunity to address intrinsic computational bottlenecks. Some foundational quantum algorithms have already demonstrated immense potential, e.g., Shor's algorithm achieves exponential speedup for factorization~\cite{shor1994algorithms}, while Grover's algorithm and quantum counting provide quadratic speedups for unstructured quantum search~\cite{grover1996fast} and frequency estimation~\cite{brassard2002quantum}, respectively. It is envisaged that these quantum capabilities can effectively improve the efficiency of traditional database and data mining tasks, such as Frequent Itemset Mining (FIM). Since FIM inherently involves exploring an exponentially growing candidate lattice and computing support counts across massive transactions, the intrinsic advantages of quantum parallelism via superposition offer a promising solution to break through classical performance limits.

The landscape of FIM has continued to evolve from classic itemset mining algorithms, such as Apriori, FP-Growth, and Eclat~\cite{Agrawal94,han2004fptree,zaki2000scalable}, toward high-utility and weighted mining~\cite{hamm2023tkde,zida2016efim,liu2012fhm}, top-$k$ pattern discovery~\cite{iqbal2021tkfim,wan2024ptf}, and stream, incremental, and constraint-based mining~\cite{dong2023fpstreampp,duong2021gctree,nguyen2020satfim}. Researchers have also achieved performance gains via distributed frameworks and GPU acceleration~\cite{singh2019sparkapriori,diaby2017sfpg,wan2024ptf}. 
Despite extensive algorithmic and architectural optimizations, existing methods are designed for classical execution models, in which candidates must be stored as explicit memory objects and support counts are maintained via sequential database scans or complex tree structures. Consequently, they inevitably suffer from severe memory bottlenecks and excessive computational overhead~\rB{\cite{borgelt2023survey,moens2013frequent}} when dealing with massive databases or low minimum support thresholds. Today, quantum computing presents a unique opportunity to fundamentally bypass these limitations. Unlike classical computing paradigms that must explicitly instantiate every candidate as a separate memory object and iteratively scan transactions to verify support, a quantum-based approach may leverage \textit{superposition} to maintain a vast candidate space within a compact quantum register. By evaluating the entire candidate space collectively and simultaneously rather than sequentially, quantum computing eliminates the need for exhaustive database scans and memory allocation, thereby offering a powerful paradigm to overcome the scalability bottlenecks of classical mining algorithms.

While these quantum advantages are highly promising, directly porting classical FIM algorithms to a quantum framework is impractical. Quantum computing operates under unique physical constraints, such as strict computational reversibility~\rB{\cite{bennett1973logical}} and the high cost of maintaining quantum coherence, which introduce entirely new design challenges.\footnote{\textit{Strict reversibility} requires quantum logic gates to be invertible. Unlike classical variables that can be freely overwritten or garbage-collected, qubits used for temporary storage cannot simply be erased after they have served their purpose. Once their intermediate values are no longer needed, and before these qubits can be reused without carrying residual garbage, they must be recycled by explicitly ``undoing'' the calculation, i.e., applying the inverse operations to reset them back to zero (a process called \textit{uncomputing}). Furthermore, unlike classical memory (e.g., RAM) that stably retains its state indefinitely, qubits possess a fragile lifespan known as \textit{quantum coherence}. If a computational circuit is too deep (i.e., execution takes too long), the qubits will degrade into random noise (decoherence) before the algorithm completes.}
Namely, a reliable quantum formulation for FIM needs to address three intertwined challenges.
\textbf{C1)~Reversible data-access mismatch.}
Classical FIM algorithms achieve highly efficient support counting by employing compressed tree- or prefix-based structures such as FP-tree~\cite{han2004fptree}, H-struct~\cite{pei2001hmine}, and related variants (e.g., CP-Tree~\cite{tanbeer2008cp} and CanTree~\cite{leung2005cantree}). These structures are designed for sequential traversal, where the next node\rB{,} chosen by following a data-dependent pointer\rB{,} immediately overwrites the temporary memory.
However, quantum computing requires all operations to be strictly reversible in order to prevent memory overwriting from \rB{breaking the required reversibility}.
If data-dependent pointer traversals are applied, the quantum system cannot simply overwrite memory states. Instead, it must allocate ancillary qubits\footnote{Unlike classical CPUs that freely overwrite temporary variables, quantum computers must allocate extra working memory (ancillary qubits) to log the history of every data-dependent jump. This record is physically mandatory to successfully reverse the circuit and clear the memory later.} to record the exact branching history for later uncomputation, which rapidly exhausts memory with unmanageable garbage qubits. It is therefore crucial to design a structurally predictable data representation that guarantees deterministic uncomputation without sacrificing the high-throughput counting efficiency of classical data structures.
\textbf{C2)~Candidate-domain encoding gap.}
The search space for itemset mining (i.e., the set of all subsets of the item universe) is exponentially larger than the set of valid frequent patterns~\cite{Agrawal94,han2000fp}. Classical FIM algorithms straightforwardly shrink this vast search domain by discarding invalid candidates from memory and explicitly iterating over the surviving subset. However, quantum computing encounters a conflicting challenge where leaving the space unpruned is computationally prohibitive, but directly mimicking classical pruning is fundamentally impractical. First, an unpruned exponential space poses a far more severe bottleneck for quantum algorithms. A direct quantum search over the full itemset lattice would allocate quantum probability to millions of irrelevant states, severely diluting the chance of observing a valid pattern.\footnote{Unlike classical search, where unpruned spaces only add sequential cost, quantum search is probabilistic. It distributes a total probability of $1.0$ across all states in the superposition. Millions of irrelevant states make valid patterns nearly unmeasurable, requiring more amplification iterations, deeper circuits, and greater risk of decoherence-induced failure.} Second, despite the urgent need to shrink the search space, directly mimicking classical approaches introduces a severe state-preparation overhead. While initializing a quantum superposition over a complete, uniform exponential space (\rB{$2^{N}$}) is straightforward (via parallel Hadamard gates~\rB{\cite{nielsen2010quantum}}), a quantum register cannot simply ``load'' a discrete list of candidate itemsets. Forcing the system to strictly encode this sparse, arbitrarily distributed list requires synthesizing deep, hardcoded circuits burdened by massive conditional logic (i.e., highly expensive multi-controlled gates) to explicitly suppress all invalid patterns. If this irregular encoding is naively performed at each mining level, the resulting circuit depth rapidly grows to an unmanageable degree. 
Therefore, it is imperative to develop a state-preparation scheme that restricts the quantum superposition to valid candidates, avoiding the deep, hardcoded circuits otherwise required to encode a massive list of specific candidate itemsets.
\textbf{C3)~Repeated reversible support-verification cost.}
Classical FIM verifies candidate support by sequentially scanning transactions or iteratively traversing trees to accumulate frequency counts. However, porting this sequential logic directly to a quantum setting threatens system stability. Because the quantum search procedure invokes a verification oracle (i.e., the subroutine computing candidate supports) repeatedly, mimicking classical counting forces the system to compile these sequential counting loops into an excessively long sequence of reversible logic gates.\footnote{Unlike classical CPUs that dynamically execute jump/loop instructions, quantum circuits generally lack dynamic control flow for data-dependent loops. A classical loop over \rB{the $M$ transactions} must be physicalized as an ``unrolled'' sequence of \rB{$M$} reversible gate blocks, which linearly deepens the circuit.} Such deep circuits demand an execution duration that severely exceeds the strict coherence time limits of physical hardware, rendering the computation fundamentally infeasible. 
Therefore, it is essential to bypass sequential accumulation by mapping global database verification onto a quantum circuit of mathematically bounded depth, ensuring reliable execution within hardware coherence limits.

In this paper, we explore quantum computing for the frequent itemset mining (FIM) problem~\cite{Agrawal94,han2000fp}. The proposed framework, \textit{\fullalgo\ (QFM)}, introduces three important components designed to address the three challenges.
1)~For \textbf{C1}, we introduce \textit{Bit-Vector Qubit Encoding} by structuring transaction data into branchless bit-vectors, to avoid the garbage-qubit overhead of data-dependent pointer traversals and strictly guarantee systematic uncomputation. Furthermore, we design a \textit{counting oracle} $\mathcal{O}_{\sf count}$ to efficiently reduce support-evaluation complexity to $\mathcal{O}(\sqrt{M})$ (where $M$ is the number of transactions), yielding a theoretical quadratic speedup over the $\mathcal{O}(M)$ bound of classical sequential scans.
2)~For \textbf{C2}, we design \textit{Mining-Aware Candidate Superposition} by mapping the sparse candidate domain onto a compact space via \textit{Candidate Subspace Index (CSI)}, strictly confining the quantum superposition to valid parent combinations rather than the full $O(2^N)$ lattice. This compact CSI representation enables efficient state preparation via parallel Hadamard gates, preventing the exponential probability dilution of full-lattice searches while completely bypassing the deep, multi-controlled logic required to encode arbitrarily distributed candidates.
3)~For \textbf{C3}, we develop \textit{Bit-Parallel Threshold Marking} by formulating a \textit{threshold-marking oracle} $\mathcal{O}_{\ge\sigma}$, which is a parallelized reversible arithmetic block, completely bypassing the unmanageable $\mathcal{O}(M)$ \rB{time} of classical sequential scans. This shift to parallel bitwise computation enforces a strictly bounded $\mathcal{O}(\log M)$ \rB{time}, ensuring the iteratively queried oracle \rB{executes} reliably within hardware coherence limits during amplitude amplification.

We theoretically analyze QFM in terms of time complexity, implement it on IBM Qiskit and Amazon Braket, and evaluate our implementation on real-world datasets against various classical baselines, including a GPU-parallel approach. 
Our evaluation confirms that QFM successfully maintains a strictly shallow circuit depth, demonstrating the practical potential of quantum parallelism to eventually bypass the persistent bottlenecks of classical mining algorithms. The contributions include:
\begin{itemize}
   \item We identify three critical quantum-specific challenges for frequent itemset mining: the reversible data-access mismatch, candidate-domain encoding gap, and repeated reversible support-verification cost. 
   \item We design a novel framework, \fullalgo\ (QFM), which involves Bit-Vector Qubit Encoding, Mining-Aware Candidate Superposition, and Bit-Parallel Threshold Marking to address the challenges.
    \item We conduct a comprehensive theoretical analysis of QFM and validate our proposal on IBM Qiskit and Amazon Braket. Extensive evaluation on real-world datasets demonstrates that QFM outperforms representative baselines by an average of 96\%.
\end{itemize}

\section{Related Work}
\label{sec:related_work}

\textbf{Classical candidate, vertical, and tree-based mining.}
Frequent itemset mining (FIM) has been well studied through several design changes. 
Apriori~\cite{Agrawal94} leverages downward-closure pruning for level-wise candidate reduction, Eclat~\cite{zaki2000scalable} evaluates support via vertical tid-list intersections, and FP-growth~\cite{han2004fptree} bypasses explicit candidate generation entirely by recursively mining prefix-tree compressed databases.\footnote{While directly adapting these paradigms is physically prohibitive, we discuss possible extensions of QFM toward quantum counterparts of these classical algorithms in~\onlineref{app:eclat_mapping}.}
To accelerate these pioneering algorithms, subsequent efforts have been made to heavily optimize them by developing advanced data representations (e.g., compact array-based variants), hashing mechanisms, and bitwise set operations~\rB{\cite{borgelt2023survey,vo2012dbv}}.
Beyond standard frequent itemsets, researchers have broadly extended these classical principles to alternative pattern definitions, including closed itemsets~\cite{pasquier1999discovering,zaki2002charm,lucchese2005fast,uno2004lcm}, maximal itemsets~\cite{chen2002fpmax,zhang2021multi,zhang2021right}, high-utility itemsets~\cite{wu2022ubpminer}, and top-$k$ mining~\cite{rehman2022efficient,iqbal2021tkfim}.
However, despite diverse algorithmic advancements and extended pattern definitions, their efficiency still relies on assumptions that cannot directly transfer to quantum computing, i.e., their mechanisms provide no reversible counterpart for pointer-based data access, no method for preparing quantum states restricted to valid candidate itemsets, and no low-depth support-verification structure suitable for repeated verification oracle invocations, and therefore cannot address \textbf{C1}, \textbf{C2}, and \textbf{C3} in FIM.

\textbf{Distributed and GPU-accelerated mining.}
To handle massive datasets, distributed frameworks such as Dist-Eclat~\cite{moens2013frequent}, PFP~\cite{li2008pfp}, and Spark-based EAFIM~\cite{raj2020eafim} partition mining tasks across clusters, while GPU-oriented approaches further accelerate parallel counting or itemset operations on many-core devices~\cite{diaby2017sfpg,fang2024gpu}. These studies improve the throughput of classical mining by assigning candidate generation, data partitioning, and support evaluation to multiple workers or threads. 
However, they accelerate candidate generation and support evaluation on classical hardware, instead of transferring them into quantum-compatible operations. Therefore, they also cannot reformulate data access as reversible operations, restrict quantum state preparation to valid candidate itemsets, or express repeated support verification as shallow circuits, and thus cannot address \textbf{C1}, \textbf{C2}, and \textbf{C3} in FIM.

\textbf{Quantum algorithms and quantum mining.}
Quantum computing has been studied across a wide range of problems, from foundational models of quantum query algorithms and computational complexity~\cite{ambainis2004quantum,buhrman2002complexity,brassard2002quantum}, to quantum simulation and combinatorial optimization~\cite{kandala2017vqe,farhi2014qaoa}, and to domain applications such as finance~\cite{FinanceReview2018}. These studies demonstrate the broad applicability of quantum computing, but they are not designed to address FIM. Among mining-related studies, qARM~\cite{yu2016quantum,yu2022experimental} is an early attempt at association-rule mining that partially uses quantum counting. However, it relies on a classical join-and-prune procedure in traditional machines, instead of quantum circuits for candidate generation, lacking a complete quantum mining pipeline.\footnote{qARM falls short of QFM for \rB{three} reasons. 1) It mines within a given candidate set: qARM bounds the database-oracle queries for mining frequent \(k\)-itemsets inside a \emph{given} candidate set, summed over levels, but keeps inter-level candidate generation classical. 2) It relies on approximate supports, which weakens exact thresholding: qARM outputs \(\epsilon\)-approximate supports, making thresholding nondeterministic near the boundary and introducing a \(1/\epsilon\) factor; \rB{for normalized supports, in the worst case near the threshold,} exact threshold decisions force \(\epsilon=\mathcal{O}(1/M)\). Accordingly, qARM requires \(\mathcal{O}((k{+}1)\sqrt{|C_{k+1}|\,|L_{k+1}|}/\epsilon)\), whereas Lemma~\ref{lem:qil_complexity} in Section~\ref{sec:complexity} gives QFM the exact-thresholding bound \(\mathcal{O}(\sqrt{|C_{k+1}|\max\{1,|L_{k+1}|\}}\cdot\mathrm{polylog})\). 3) Its experiments remain limited in scale: qARM compiles circuits only for \(2{\times}2\) and \(4{\times}4\) toy databases~\cite{yu2022experimental}, whereas QFM is evaluated on realistic benchmark datasets with hardware validation on Amazon Braket. A side-by-side comparison is provided in~\onlineref{sec:comparison}.}
In contrast, we formulate FIM as a quantum mining problem and identify challenges, including reversible data-access mismatch, candidate-domain encoding gap, and repeated reversible support-verification cost. We further design QFM as an end-to-end quantum FIM algorithm, provide theoretical analysis, and validate the framework on IBM Qiskit and Amazon Braket.

\section{Problem Formulation and Preliminary}
\label{sec:problem_formulation}

In this section, we first define the Frequent Itemset Mining (FIM) problem and then introduce the quantum preliminaries. The notation table is presented in~\onlineref{sec:table} due to space constraints.

\subsection{Problem Definition}

Following FIM in~\cite{Agrawal94}, let $\mathcal{I} = \{i_1, i_2, \dots, i_N\}$ be a universe of $N$ distinct items. A transaction database $D = \{T_1, T_2, \dots, T_M\}$ consists of $M$ transactions, where transaction $T_j$, denoted by transaction identifier $j$, contains a subset of items in $\mathcal{I}$ for $j=1,\ldots,M$. An itemset $I \subseteq \mathcal{I}$ is a non-empty set of items. The length of an itemset $I$ is denoted by $|I|$, and an itemset of length $k$ is referred to as a $k$-itemset.

\begin{definition}[Support~\cite{Agrawal94}]
The support count of an itemset $I$ in database $D$, denoted as $\operatorname{sup}(I)$, is the number of transactions in $D$ that contain $I$, i.e., $\operatorname{sup}(I) = |\{T_j \in D \mid I \subseteq T_j,\ \text{where } 1\leq j\leq M\}|$.
\end{definition}

Given a user-specified minimum support threshold $\sigma$ (where $1 \le \sigma \le M$), an itemset $I$ is considered \textit{frequent} if $\operatorname{sup}(I) \ge \sigma$. Let $L_k$ denote the set of all frequent $k$-itemsets, and $L = \bigcup_{k} L_k$ be the set of all frequent itemsets. 

\begin{definition}[Frequent Itemset Mining (FIM)]
Given a transaction database $D$ and a minimum support threshold $\sigma$, the FIM problem aims to identify the complete set of frequent itemsets $L = \{I \subseteq \mathcal{I} \mid \operatorname{sup}(I) \ge \sigma\}$.
\end{definition}

\subsection{Preliminary of Quantum Computing}
\label{sec:preliminary} 
In this section, we first describe quantum states, including qubits, basis states, measurement, and superposition, to explain how quantum information is represented and read. We then introduce quantum circuits, including gates, circuit depth, reversibility, and uncomputation, to explain how quantum computing is organized and why temporary values need to be cleaned carefully. Finally, we describe oracles and marking, which are used to express condition checking inside quantum computing.

\paragraph{Quantum states}
A \textit{qubit} is the basic unit of quantum information. While a classical bit has a deterministic value of either $0$ or $1$, a qubit can be described by probabilities over the two outcomes $0$ and $1$ before it is measured. These two measurable outcomes are called the \textit{basis states}, denoted by $\ket{0}$ and $\ket{1}$.\footnote{We use Dirac notation $\ket{x}$ to denote computational basis states.} Measurement reads the qubit and returns one of these basis states according to its probabilities.
Multiple qubits can be grouped as a \textit{register} to encode a binary label. For example, a register with $d$ qubits has $2^d$ basis states $\ket{0},\ldots,\ket{2^d-1}$, each of which can serve as one label in a finite search domain. \textit{Superposition} allows the register, before measurement, to keep multiple labels possible at the same time. A state with the form $\sum_{r=0}^{2^d-1}\alpha_r\ket{r}$ represents this superposition over labels, where the coefficients $\alpha_r$ determine the probability of each label being measured.

\paragraph{Quantum circuits}
Quantum computing is performed on a \textit{circuit}, which is a sequence of \textit{gates} that change the states of qubits or registers. The \textit{circuit depth} counts the number of dependent gate layers in a circuit. Gates in the same layer can be executed in parallel, while gates with data dependency need to be placed in later layers. A shallower circuit is generally preferable because it reduces the amount of sequential quantum computation~\rB{\cite{preskill2018nisq}}.
A fundamental property of quantum circuits is \textit{reversibility}~\rB{\cite{bennett1973logical,nielsen2010quantum}}. Each gate requires an invertible transformation, meaning the previous state can be recovered from the output state. As a result, operations that are routine in classical programs, such as overwriting or discarding temporary values, must be represented through reversible transformations in quantum circuits. A quantum circuit cannot simply erase intermediate values, because this would remove information needed to reverse the computation, making the operation invalid as a quantum circuit step. This imposes a strict constraint on quantum circuit design, since every intermediate value must be carefully managed and either preserved as part of a reversible mapping or explicitly uncomputed before the workspace can be safely reused. The standard technique is \textit{uncomputation}~\rB{\cite{bennett1973logical}}, which cleans temporary values by applying inverse operations to return the corresponding registers to their original states, so that these registers can be reused in later circuit steps without carrying irrelevant intermediate information.

\paragraph{Oracles and marking}
An \textit{oracle} is a reusable \rB{and} reversible circuit that checks whether a state satisfies a specified condition, such as whether a candidate itemset reaches the minimum-support threshold. Since directly measuring a superposition would collapse it, the oracle does not enumerate desired states by measurement. Instead, it \textit{marks} the states satisfying the condition, so that later amplification steps~\rB{\cite{grover1996fast,brassard2002quantum}} can make these states very likely to appear as the desired observed results.

\section{Algorithm QFM}
\label{sec:algorithm}

In this section, we propose \textit{\fullalgo\ (QFM)}. To address the reversible data-access mismatch (\textbf{C1}), we introduce \textit{Bit-Vector Qubit Encoding}. Rather than \rB{recursively navigating} pointer-based data structures (which forces a quantum system to retain unpredictable traversal paths), this encoding explicitly structures the transaction data as branchless bit-vectors to ensure that temporary quantum registers can be systematically uncomputed and recycled, effectively avoiding unmanageable garbage-qubit overhead~\rB{\cite{bennett1973logical}}. Furthermore, we design a \textit{counting oracle} $\mathcal{O}_{\sf count}$ to reduce the required oracle calls for \rB{per-item} support evaluation to $\mathcal{O}(\sqrt{M})$ with quantum amplitude estimation (QAE)~\cite{brassard2002quantum}. This yields a quadratic theoretical speedup over the $\mathcal{O}(M)$ sequential scans inherently required by classical algorithms~\cite{Agrawal94,han2004fptree}.

To address the candidate-domain encoding gap (\textbf{C2}), we design \textit{Mining-Aware Candidate (MAC) Superposition}. Instead of initializing a quantum state over the full $O(2^N)$ itemset lattice or synthesizing deep, hardcoded circuits to explicitly encode a sparse, arbitrarily distributed candidate list, MAC \rB{s}uperposition mathematically maps the search domain onto a contiguous register via \rB{the} \textit{Candidate Subspace Index (CSI)}. Each CSI \rB{index} references two combinable frequent itemsets discovered in the previous mining level, 
avoiding deep conditional circuits required to hardcode every single valid candidate combination into a quantum state. This enables \rB{the} MAC \rB{s}uperposition to be prepared with a highly optimized circuit of shallow depth.
This design decouples candidate-domain construction from support verification by first restricting the quantum state to \rB{the CSI-indexed candidate domain}. Thus, support \rB{verification} is \rB{then} invoked only on candidates represented by those CSI \rB{indices}. The search space is therefore limited to the number of structurally valid parent pairs, rather than the full $O(2^N)$ itemset lattice, drastically reducing circuit complexity and preventing the threshold-marking oracle from wasting expensive resources on invalid itemsets.

To mitigate the repeated reversible support-verification cost (\textbf{C3}), we develop \textit{Bit-Parallel Threshold Marking}. Rather than unrolling classical sequential scans into unmanageably deep linear circuits, we design the \textit{threshold-marking oracle} $\mathcal{O}_{\ge\sigma}$ as a highly parallelized reversible arithmetic block. By exploiting Bit-Vector Qubit Encoding, $\mathcal{O}_{\ge\sigma}$ turns support verification into a bit-vector counting problem, allowing the support count to be computed in parallel rather than accumulated through sequential scans. It reduces the oracle \rB{evaluation time} from $\mathcal{O}(M)$ to $\mathcal{O}(\log M)$. This linear-to-logarithmic reduction is critical because the threshold-marking oracle \rB{needs to be} iteratively queried during amplitude amplification.

Equipped with the above designs, QFM operates as an iterative mining pipeline with three modules: Quantum Preprocessing and Representation (QPR), Quantum Support Verification (QSV), and Quantum Itemset Listing (QIL). QFM begins with QPR, which takes the transaction database and the minimum support threshold as inputs. QPR \rB{first} employs $\mathcal{O}_{\sf count}$ to evaluate 1-item supports, identifying the frequent 1-itemsets and initializing their corresponding Bit-Vector Qubit Encoding. After the 1-itemset stage, QFM iterates between QSV and QIL to mine increasingly \rB{large} itemsets. Given the frequent itemsets discovered at the previous iteration, QSV prepares the Mining-Aware Candidate Superposition and configures the reusable $\mathcal{O}_{\ge\sigma}$ to evaluate candidate supports and mark those satisfying the support threshold. QIL then extracts the candidates that $\mathcal{O}_{\ge\sigma}$ marks as frequent. This iterative process continues until no new candidates can be formed, ultimately yielding the complete set of frequent itemsets.
\rR{Algorithm~\ref{alg:QFM} presents the main pseudocode; \rB{additional} module-level procedures are provided in~\onlineref{sec:codes}.}

\begin{algorithm}[t]
  \caption{QFM\rB{. $U_{\rm cand}^{(k+1)}$ denotes the Mining-Aware Candidate superposition prepared by QSV over the level-$(k{+}1)$ candidate domain.}}
  \label{alg:QFM}
  \begin{algorithmic}[1]
    \Require transaction database $D$, minimum support threshold $\sigma$
    \Ensure all frequent itemsets $L$
    \State $L_1 \gets QPR(D,\sigma)$ 
    \State $L\gets L_1$, $k\gets 1$
    \While{$L_k \neq \emptyset$}
        \State $(U_{\rm cand}^{(k+1)}, \mathcal O^{(k+1)}_{\ge\sigma}) \gets \rB{QSV}(L_k,\sigma)$ 
        \State $L_{k+1} \gets QIL(U_{\rm cand}^{(k+1)},\mathcal O^{(k+1)}_{\ge\sigma})$
        \If{$L_{k+1}=\emptyset$}
            \State \textbf{break}
        \EndIf
        \State $L\gets L\cup L_{k+1}$; $k\gets k+1$
    \EndWhile
    \State \Return $L$
  \end{algorithmic}
\end{algorithm}

\subsection{Quantum Preprocessing and Representation (QPR)}
\label{subsec:QPr}
\begin{figure}[t]
\centering
{\color{rkR}\small
\setlength{\tabcolsep}{7pt}\renewcommand{\arraystretch}{1.18}
\begin{tabular}{l|cccc}
TID & $T_1$ & $T_2$ & $T_3$ & $T_4$ \\ \hline
Items & A, C, D & B, C, E & A, B, C, E & B, E
\end{tabular}}
\caption{\rR{Running-example transaction database ($\sigma=2$).}}
\label{fig:toy_db}
\end{figure}

The primary objective of QPR is to initialize the mining session by replacing recursively created data structures in classical algorithms with a deterministic, branchless data representation that works seamlessly with subsequent quantum oracles. Instead of navigating unpredictable, pointer-based structures (e.g., FP-tree), QPR performs a one-time systematic ingestion of the raw transaction database $D$ and maps $D$ into a uniform bit-vector space. Given the threshold $\sigma$, QPR identifies all frequent 1-itemsets (denoted as $L_1$) and constructs their \textit{Bit-Vector Qubit Encoding}. Ultimately, this initialization decouples the data representation from arbitrary traversal paths, establishing a structurally predictable foundation that strictly guarantees systematic uncomputation for all later mining stages.

To construct Bit-Vector Qubit Encoding, we design the \textit{counting oracle $\mathcal{O}_{\sf count}$}. In classical FIM algorithms, determining the support of an item $i$ involves a linear scan of the database $D$ with time complexity $\mathcal{O}(M)$. To overcome this $\mathcal{O}(M)$ bottleneck while strictly adhering to quantum reversibility, we define $\mathcal{O}_{\sf count}$ as a \textit{unitary operator} that retrieves item occurrence values.
\begin{definition}[Counting Oracle $\mathcal{O}_{\sf count}$]
For each item $i$, the counting oracle is a unitary operator that evaluates item presence. Given a transaction index $j$, it indicates whether item $i$ appears in transaction $j$ via:
\begin{equation}
  \mathcal{O}_{\sf count}:\ket{j,b}\;\mapsto\;\ket{j,b\oplus X_{i,j}},
\end{equation}
where $b\in\{0,1\}$ is the state of a target qubit, $\oplus$ denotes the XOR operation, and $X$ is a binary incidence matrix such that $X_{i,j}=1$ if and only if item $i$ appears in transaction $j$.
Since one such oracle is associated with each item, we write $\mathcal{O}^{(i)}_{\sf count}$ when the item needs to be emphasized.
\end{definition}

\begin{example}
\label{ex:count_oracle}
Consider verifying whether item A is present in transaction $T_3$ of the database shown in \rR{Figure~\ref{fig:toy_db}}. With the target qubit initialized to $b=0$, QFM begins at the state $\ket{3,0}$. Executing the counting oracle yields:
$$\mathcal{O}_{\sf count}\ket{3,0} \mapsto \ket{3,0 \oplus \rB{X_{A,3}}} = \ket{3,0 \oplus 1} = \ket{3,1}.$$
The resulting state $\ket{3,1}$ indicates that A is present in $T_3$.
\exend
\end{example}

To initialize the mining session, QPR evaluates each item $i \in \mathcal{I}$ by preparing a quantum superposition over all $M$ transactions. In this quantum state, each transaction is assigned a probability \textit{amplitude}, and the squared magnitude of the amplitude determines the probability of observing that transaction. Rather than performing a classical linear scan, QPR repeatedly interleaves the counting oracle $\mathcal{O}_{\sf count}$ (which flags the specific transactions containing item $i$) with a reflection operator.\footnote{This operator drives quantum interference, a mechanism that combines amplitudes to constructively reinforce desired outcomes and cancel out undesired ones. It geometrically reflects the amplitude values across their overall average, systematically increasing the amplitudes of the flagged transactions while suppressing the unflagged ones.} Crucially, because the actual probability scales with the square of these amplified \textit{amplitudes}, \rB{the detection probability of the flagged transactions grows quadratically with the number of amplification iterations in the initial amplification regime}. This fundamental property \rB{enables} the quantum state to converge and reveal \rB{a bounded-error} support count estimate $\operatorname{sup}(i)$ using only $\mathcal{O}(\sqrt{M})$ oracle calls, effectively breaking the classical $\mathcal{O}(M)$ scanning limit. The resulting estimates are utilized to filter out infrequent items, systematically pruning them to conserve computational resources.

For the surviving frequent 1-itemsets $L_1$, each itemset $I$ can be represented as a transaction-indicator vector $v_I \in \{0,1\}^{M}$, where $(v_I)[j]=1$ if and only if $I \subseteq T_j$; hence, its support is $\operatorname{sup}(I)=\|v_I\|_1$. Next, $v_I$ serves as the fundamental blueprint to initialize the \textit{Bit-Vector Qubit Encoding}. Instead of creating an amplitude-style quantum superposition over transaction indices, this encoding explicitly maps $v_I$ into an $M$-qubit register as a deterministic computational basis state (i.e., $\ket{v_I} = \ket{(v_I)[1]} \otimes \dots \otimes \ket{(v_I)[M]}$). Elevating the transaction-indicator vector into the quantum domain is essential to allow downstream quantum operations to interact seamlessly with candidate superpositions. Consequently, QPR identifies $L_1$ \rB{within} \rB{$\mathcal{O}(N\sqrt{M})$} oracle calls and establishes strictly predictable, branchless Bit-Vector Qubit Encoding, ensuring that subsequent circuits can process transaction data without the unmanageable garbage-qubit overhead of pointer-based traversals.

\begin{example}
\label{ex:QPr}
Consider the database presented in \rR{Figure~\ref{fig:toy_db}} under a minimum support threshold $\sigma = 2$. QPR represents each 1-itemset \rB{by} the transaction-indicator vector\rB{s} below:
\[
\begin{array}{lll}
v_{\text{A}} = [1010] & v_{\text{B}} = [0111] & v_{\text{C}} = [1110] \\
v_{\text{D}} = [1000] & v_{\text{E}} = [0111] &
\end{array}
\]
QPR then leverages $\mathcal{O}_{\sf count}$ to evaluate the support count of each item across all transactions. The derived support counts for 1-itemsets are as follows:
\[
\begin{array}{lll}
\operatorname{sup}(\text{A}) = 2 &
\operatorname{sup}(\text{B}) = 3 &
\operatorname{sup}(\text{C}) = 3 \\
\operatorname{sup}(\text{D}) = 1 &
\operatorname{sup}(\text{E}) = 3 &
\end{array}
\]
Finally, QPR constructs the Bit-Vector Qubit Encoding for the frequent 1-itemsets (i.e., A, B, C, and E) as follows.
\[
\begin{aligned}
\ket{v_\text{A}} &= \ket{1}\otimes\ket{0}\otimes\ket{1}\otimes\ket{0}
          = \ket{1010}, \\
\ket{v_\text{B}} &= \ket{0}\otimes\ket{1}\otimes\ket{1}\otimes\ket{1}
          = \ket{0111}, \\
\ket{v_\text{C}} &= \ket{1}\otimes\ket{1}\otimes\ket{1}\otimes\ket{0}
          = \ket{1110}, \\
\ket{v_\text{E}} &= \ket{0}\otimes\ket{1}\otimes\ket{1}\otimes\ket{1}
          = \ket{0111}.
\end{aligned}
\]
\exend
\end{example}

\subsection{Quantum Support Verification (QSV)}
\label{subsec:QFc}
\label{subsec:QSV} 
Following QPR, QFM progressively mines larger itemsets in level-wise loop (where each level $k$ utilizes the frequent $k$-itemsets, denoted as $L_k$, to discover candidates of size $k+1$). Within a loop, i)~QSV prepares a \textit{Mining-Aware Candidate (MAC) superposition} comprising only \rB{structurally} valid candidates by leveraging the \textit{Candidate Subspace Index (CSI)} to restrict the quantum register to prefix-sharing joins of itemsets in $L_k$; and ii)~\rB{QSV constructs} a reusable threshold-marking oracle that evaluates support counts exclusively for these candidates, requiring only logarithmic circuit depth.
Procedure~\ref{alg:QSV} presents the pseudocode of QSV.

\begingroup
\floatname{algorithm}{Procedure}
\begin{algorithm}[t]
  \caption{Quantum Support Verification (\rB{QSV})}
  \label{alg:QSV}
  \begin{algorithmic}[1]
    \Require frequent itemsets $L_k$, threshold $\sigma$
    \Ensure candidate preparation $U_{\rm cand}^{(k+1)}$ and threshold oracle $\mathcal O^{(k+1)}_{\ge\sigma}$
    \State Construct prefix-sharing parent-pair set $P_k=\{(u,w):u,w\in L_k, |u\cup w|=k+1\}$
    \State Define $U_{\rm cand}^{(k+1)}$ that prepares a superposition over $P_k$ via CSI
    \State Define load oracle $\mathcal O^{(k)}_{\sf load}: |x,z\rangle\mapsto |x,z\oplus v_x \rangle$
    \State Define $\mathcal O^{(k+1)}_{\ge\sigma}$ on $|u,w,b\rangle$:
    \Statex \quad (i) load $v_u,v_w$ using $\mathcal O^{(k)}_{\sf load}$
    \Statex \quad (ii) compute $v_{u\cup w}=v_u\wedge v_w$ and $s=\|v_{u\cup w}\|_1$
    \Statex \quad (iii) mark $b$ iff $s\ge\sigma$
    \Statex \quad (iv) uncompute all work registers
    \State \Return $U_{\rm cand}^{(k+1)},\mathcal O^{(k+1)}_{\ge\sigma}$
  \end{algorithmic}
\end{algorithm}
\endgroup

Let $P_k$ denote the set of prefix-sharing parent pairs $\{(u,w) \mid u,w\in L_k \text{ and } |u\cup w|=k+1 \}$; \rB{under a fixed item order, such a pair consists of two frequent $k$-itemsets that share their first $k{-}1$ items and differ in the last, so each candidate $(k{+}1)$-itemset corresponds to exactly one pair in $P_k$}.
QSV prepares \rB{the} MAC superposition over $P_k$ structured by CSI, where each basis state $\ket{u,w}$ encodes exactly one parent pair $(u,w)$. By encoding only these parent indices, each state implicitly defines the corresponding candidate itemset $I=u\cup w$. Specifically, QSV constructs CSI by numbering the parent pairs in $P_k$ as a contiguous index range through a bijective map
\[
\pi_k:\{0,\ldots,|P_k|-1\}\rightarrow P_k,
\qquad
\pi_k(r)=(u_r,w_r).
\]
It then allocates a \(q_k\)-qubit index register, where \(q_k=\lceil\log_2 |P_k|\rceil\) \rB{(well-defined since the level-wise iteration proceeds only when $P_k\neq\emptyset$)}, initialized to \(\ket{0}^{\otimes q_k}\). \rB{A parallel layer of Hadamard gates prepares the superposition in \(\mathcal{O}(1)\) time:}
\[
H^{\otimes q_k}\ket{0}^{\otimes q_k}
=
\frac{1}{\sqrt{2^{q_k}}}
\sum_{r=0}^{2^{q_k}-1}\ket{r}.
\]
For each valid index \(r<|P_k|\) within CSI, the basis state \(\ket{r}\) represents the parent pair \(\pi_k(r)=(u_r,w_r)\), and we write this logical state as \(\ket{u_r,w_r}\). If \(2^{q_k}>|P_k|\), the remaining basis states are padded indices with no associated parent pair, and thus never marked by the threshold-marking oracle.

\begin{example}
\label{ex:superposition}
Following Example~\ref{ex:QPr}, the frequent 1-itemsets are A, B, C, and E. For \(k=1\), QSV forms the parent-pair domain
\[
P_1=
\{(\text{A},\text{B}),(\text{A},\text{C}),(\text{A},\text{E}),
(\text{B},\text{C}),(\text{B},\text{E}),(\text{C},\text{E})\}.
\]
Each pair of 1-itemsets are unioned to give the candidate 2-itemsets
\[
C_2 =
\{\{\text{A},\text{B}\}, \{\text{A},\text{C}\}, \{\text{A},\text{E}\},
\{\text{B},\text{C}\}, \{\text{B},\text{E}\}, \{\text{C},\text{E}\}\}.
\]
The six parent pairs are numbered by CSI as follows:
\[
\begin{array}{lll}
0\leftrightarrow(\text{A},\text{B}) &
1\leftrightarrow(\text{A},\text{C}) &
2\leftrightarrow(\text{A},\text{E}) \\
3\leftrightarrow(\text{B},\text{C}) &
4\leftrightarrow(\text{B},\text{E}) &
5\leftrightarrow(\text{C},\text{E})
\end{array}
\]
Since \(|P_1|=6\), QSV uses a \(q_1=\lceil\log_2 6\rceil=3\)-qubit index register. 
Starting from \(\ket{000}\), QSV exploits a parallel Hadamard layer to prepare the Mining-Aware Candidate Superposition:
\[
H^{\otimes 3}\ket{000}
=
\frac{1}{\sqrt{8}}\sum_{r=0}^{7}\ket{r},
\]
where the valid indices \(r=0,\ldots,5\) correspond to the six parent pairs listed above, e.g., \(\ket{3}\equiv\ket{\text{B},\text{C}}\) with candidate itemset \(\{\text{B},\text{C}\}=\text{B}\cup\text{C}\), while the padded indices \(r=6,7\) have no associated parent pair.
\exend
\end{example}

With Mining-Aware Candidate Superposition established, QSV then verifies the support for each candidate $I=u \cup w$. This verification involves two oracles: a)~A \textit{transaction load oracle} $\mathcal{O}^{(k)}_{\sf load}$ to instantiate the transaction-indicator vectors of the parent itemsets as quantum states, enabling subsequent quantum operators to interact seamlessly with the candidate superposition; b)~A \textit{threshold-marking oracle} $\mathcal{O}^{(k+1)}_{\ge\sigma}$ to compute the candidate support and mark those that meet the threshold $\sigma$.

\begin{definition}[Transaction Load Oracle $\mathcal{O}^{(k)}_{\sf load}$]
Let $v_x\in\{0,1\}^M$ be the transaction-indicator vector for an itemset $x\in L_k$. The load oracle is defined as:
\begin{equation}
\mathcal O^{(k)}_{\sf load}:\ket{x,z}\mapsto \ket{x,z\oplus v_x},
\end{equation}
where $z$ is an $M$-qubit target register. Initialized to the zero state $0^M$, this register $z$ receives the transaction-indicator vector $v_x$ via XOR, resulting in the output state $\ket{x, v_x}$, i.e., mapping $v_x$ into the quantum circuit. Since XOR is self-inverse, applying the same load oracle again maps $\ket{x,z\oplus v_x}$ back to $\ket{x,z}$. Hence,
$\mathcal{O}^{(k)}_{\sf load}$ is reversible.
\end{definition}

\begin{example}
\label{ex:load_oracle}
Following Example~\ref{ex:superposition}, take the valid index \(r=3\), which corresponds to the parent pair \(\ket{3}\equiv\ket{\text{B},\text{C}}\). From Example~\ref{ex:QPr}, the corresponding parent vectors are \(v_{\text{B}}=[0111]\) and \(v_{\text{C}}=[1110]\). Since the \(M\)-qubit target register is initialized to \(\ket{0^4}\), the load oracle gives
\[
\begin{aligned}
\mathcal{O}^{(1)}_{\sf load}\ket{\text{B},0^4}
&\mapsto
\ket{\text{B},0^4\oplus v_{\text{B}}}
=
\ket{\text{B},0111}, \\
\mathcal{O}^{(1)}_{\sf load}\ket{\text{C},0^4}
&\mapsto
\ket{\text{C},0^4\oplus v_{\text{C}}}
=
\ket{\text{C},1110}.
\end{aligned}
\]
Thus, the parent vectors \(v_{\text{B}}\) and \(v_{\text{C}}\) are loaded as \(\ket{0111}\) and \(\ket{1110}\), respectively.
\exend
\end{example}

\begin{definition}[Threshold-Marking Oracle $\mathcal{O}^{(k+1)}_{\ge\sigma}$]
For a parent-pair representation of a candidate $I= u \cup w$, this oracle marks whether the candidate\rB{'s} support reaches the minimum-support threshold $\sigma$ as follows.
\begin{equation}
  \mathcal O^{(k+1)}_{\ge\sigma}:\ket{u,w,b}\mapsto \ket{u,w,b\oplus[\|v_u\wedge v_w\|_1\ge\sigma]},
\end{equation}
where $b \in \{0,1\}$ is the state of a target qubit to indicate whether $I$'s support reaches $\sigma$. The oracle is implemented by the following reversible sequence, whose circuit-level realization is shown in Figure~\ref{fig:qsv_circuit}:
\begin{equation}
\mathcal{O}^{(k+1)}_{\ge\sigma}=U_{\rm load}^{\dagger}U_{\wedge}^{\dagger}U_{\rm pop}^{\dagger}U_{\rm cmp}^{\dagger}X_{\rm flag}U_{\rm cmp}U_{\rm pop}U_{\wedge}U_{\rm load}.
\end{equation}
Here, each $U$ denotes a reversible circuit block, and $\dagger$ denotes its inverse operation. The circuit is read \textit{from right to left}. First, $U_{\rm load}$ applies $\mathcal{O}^{(k)}_{\sf load}$ to instantiate $v_u$ and $v_w$. Then, $U_{\wedge}$ applies bitwise AND to compute their overlap $v_u\wedge v_w$, $U_{\rm pop}$ computes the support count $\|v_u\wedge v_w\|_1$, and $U_{\rm cmp}$ compares this count with $\sigma$. The operation $X_{\rm flag}$ sets the target qubit $b$ to $1$ when the comparator output indicates that the threshold is reached.\footnote{$X_{\rm flag}$ is implemented using a Pauli-$X$ gate, which is the quantum analogue of a NOT gate. When the target qubit \(b\) is initialized to \(0\), the gate is applied only when the comparator condition is satisfied, i.e., when the support is at least \(\sigma\). In that case, it changes \(b\) from \(0\) to \(1\); otherwise, \(b\) remains \(0\). In Figure~\ref{fig:qsv_circuit}, the flag qubit is instead prepared as \(\ket{-}\). Under this preparation, applying the same conditional \(X_{\rm flag}\) to a satisfying candidate leaves the flag qubit in \(\ket{-}\) but attaches a negative sign to the candidate state, which is the marking used by the threshold-marking oracle.}
The inverse blocks $U_{\rm cmp}^{\dagger}$, $U_{\rm pop}^{\dagger}$, $U_{\wedge}^{\dagger}$, and $U_{\rm load}^{\dagger}$ then uncompute the temporary values and restore the workspace registers.
\end{definition}

\begin{figure}[t]
\centering
\small
\[
\Qcircuit @C=0.9em @R=0.8em {
 \lstick{\ket{u,w}_{\mathrm{idx}}} & \ctrl{1} & \qw      & \qw        & \qw            & \ctrl{1} & \qw \\
 \lstick{\ket{0}_{v_u}}            & \gate{\mathrm{load}\,v_u} & \ctrl{2} & \qw  & \qw & \gate{\mathrm{load}^{\dagger}} & \qw \\
 \lstick{\ket{0}_{v_w}}            & \gate{\mathrm{load}\,v_w} & \ctrl{1} & \qw  & \qw & \gate{\mathrm{load}^{\dagger}} & \qw \\
 \lstick{\ket{0}_{\wedge}}         & \qw      & \targ    & \multigate{1}{\mathrm{popcount}} & \ctrl{1} & \qw & \qw \\
 \lstick{\ket{0}_{s}}              & \qw      & \qw      & \ghost{\mathrm{popcount}} & \gate{\ge\sigma} & \qw & \qw \\
 \lstick{\ket{-}_{\mathrm{flag}}}  & \qw      & \qw      & \qw        & \ctrl{-1}      & \qw & \qw
}
\]
\caption{The threshold-marking oracle is implemented as a reversible circuit: Parent vectors are loaded, intersected ($\wedge$), \rB{the popcount is written to the support register $s$}, compared against $\sigma$, used to phase-mark the candidate, then all workspace is uncomputed \rB{by the inverse blocks}.}
\label{fig:qsv_circuit}
\end{figure}

\begin{example}
\label{ex:qsv_circuit_walkthrough}
Following Example~\ref{ex:load_oracle}, consider the candidate pair \((\text{B},\text{C})\) in the threshold-marking oracle. \rB{Intuitively, one oracle call answers a single yes/no question: ``do at least $\sigma$ transactions contain both B and C?''} With the index register set to \(\ket{\text{B},\text{C}}_{\rm idx}\) and the flag prepared in \(\ket{-}_{\rm flag}\)\rB{, the call proceeds in five steps.}
\rB{(i)~\emph{Load.}} The oracle first loads the two parent-vector registers:
\[
\ket{0^4}_{v_u}\ket{0^4}_{v_w}
\mapsto
\ket{0111}_{v_u}\ket{1110}_{v_w}.
\]
\rB{The circuit now holds, side by side, the lists of transactions containing B and containing C.} \rB{(ii)~\emph{Intersect.}} The bitwise-AND block then writes the overlap into the \(\wedge\) register:
\[
\ket{0111}_{v_u}\ket{1110}_{v_w}\ket{0^4}_{\wedge}
\mapsto
\ket{0111}_{v_u}\ket{1110}_{v_w}\ket{0110}_{\wedge},
\]
\rB{Each $1$ marks a transaction containing both parents, i.e., $T_2$ and $T_3$.} \rB{(iii)~\emph{Count.}} \rB{The} popcount block \rB{then} writes the support count into the \(s\) register:
\[
\ket{0110}_{\wedge}\ket{0}_{s}
\mapsto
\ket{0110}_{\wedge}\ket{2}_{s}.
\]
\rB{Here $s$ is exactly the support of $\{B,C\}$.} \rB{(iv)~\emph{Compare and mark.}} Since \(s=2\) meets the threshold \(\sigma=2\), the comparator and flag operation mark this candidate:
\[
\ket{2}_{s}\ket{-}_{\rm flag}
\mapsto
-\ket{2}_{s}\ket{-}_{\rm flag}.
\]
Because the flag is prepared in \(\ket{-}\), the flag register is not read as an output value. When the comparator condition is satisfied, the marking operation attaches a negative sign to the corresponding candidate state, allowing later amplification rounds to distinguish it from unmarked candidates.
\rB{(v)~\emph{Uncompute.}} Finally, the inverse blocks \(U_{\rm cmp}^{\dagger}\), \(U_{\rm pop}^{\dagger}\), \(U_{\wedge}^{\dagger}\), and \(U_{\rm load}^{\dagger}\) uncompute the workspace registers:
\[
\ket{0111}_{v_u}
\ket{1110}_{v_w}
\ket{0110}_{\wedge}
\ket{2}_{s}
\mapsto
\ket{0^4}_{v_u}
\ket{0^4}_{v_w}
\ket{0^4}_{\wedge}
\ket{0}_{s},
\]
leaving the candidate pair \((\text{B},\text{C})\) marked\rB{; 
the intermediate data are reset to their initial zero states, while the negative sign on the satisfying candidate state is preserved. This cleanup makes the oracle reusable in subsequent amplification rounds}. \rB{This completes one threshold-marking oracle call for the selected candidate pair. When the same oracle call is applied to the MAC \rB{s}uperposition from Example~\ref{ex:superposition}, steps (i)--(v) are applied to the candidate states represented in the index register. The oracle marks exactly those candidates whose support reaches \(\sigma\), i.e., \(\{A,C\}\), \(\{B,C\}\), \(\{B,E\}\), and \(\{C,E\}\).}
\exend
\end{example}

This oracle design enables \textit{Bit-Parallel Threshold Marking}. Rather than evaluating transactions sequentially, the bitwise AND $U_{\wedge}$ exploits full data parallelism to operate across all $M$ positions simultaneously. To achieve full parallelism, we structure $U_{\rm pop}$ as a \textit{parallel popcount accumulator} that utilizes a tree of reversible adders, and $U_{\rm cmp}$ is synthesized as a logarithmic-depth comparator. Consequently, after the initial instantiation, this design guarantees that each oracle call \(\mathcal{O}^{(k+1)}_{\ge\sigma}\) executes in logarithmic \rB{time} $O(\log |L_k|+\log M)$, where \(\log |L_k|\) corresponds to loading the parent vectors \(v_u\) and \(v_w\), and \(\log M\) corresponds to the popcount and comparator operations over \(M\) transactions, significantly outperforming the $\mathcal{O}(M)$ scan inherent to classical counting.

Consequently, QSV decouples state preparation from search execution. It establishes \rB{the} MAC superposition via CSI, drastically shrinking the exponential search space down to a highly targeted candidate subspace, and constructs the reusable threshold-marking oracle $\mathcal{O}^{(k+1)}_{\ge\sigma}$. Rather than applying this oracle immediately as a static filter, its invocation is deferred to the subsequent QIL module for itemset listing, to prevent redundant oracle executions during state preparation, effectively bounding the quantum register overhead and maintaining a highly optimized circuit depth.

\subsection{Quantum Itemset Listing (QIL)}
\label{subsec:QIl}

Given the MAC superposition prepared by QSV, QIL extracts the actual frequent $(k+1)$-itemsets through an iterative amplification process. Unlike classical filtering\rB{,} where a single evaluation isolates target records, a single invocation of the threshold-marking oracle $\mathcal{O}^{(k+1)}_{\ge\sigma}$ merely marks valid candidates without increasing their measurement probabilities; consequently, a static evaluation fails to retrieve the target frequent itemsets. To address this issue, QIL iteratively invokes $\mathcal{O}^{(k+1)}_{\ge\sigma}$ within an amplification loop,\footnote{This loop operates by periodically reflecting the probability amplitudes of the valid candidates around their mean. By combining this amplitude reflection with the oracle's marking, QIL effectively shifts probability mass from infrequent candidates into the valid ones.} which concentrates probability mass on the marked candidates, ensuring that a subsequent measurement collapses the quantum state into a verified frequent itemset with high probability.

To execute this amplification-based extraction reliably, QIL performs two supporting steps: \textit{boundary estimation} and \textit{dynamic deduplication}. 1)~Boundary estimation determines, or upper-bounds, how many candidates are currently marked by the threshold-marking oracle. This information is needed because amplitude amplification is periodic: applying too few amplification iterations may leave the marked candidates with insufficient measurement probability, whereas applying too many iterations may move the state past the high-probability measurement point. QIL evaluates this boundary by probing the amplification process induced by the same MAC \rB{s}uperposition and the same threshold-marking oracle \(\mathcal{O}^{(k+1)}_{\ge\sigma}\), either by extracting the periodic behavior of the amplification loop or by progressively increasing the number of oracle invocations until a suitable budget is reached. Crucially, because QSV decouples the preparation of \rB{the} MAC \rB{s}uperposition from candidate marking and extraction, boundary estimation reuses the same threshold-marking oracle \(\mathcal{O}^{(k+1)}_{\ge\sigma}\) without requiring circuit modifications. 2)~Dynamic deduplication excludes candidates that have already been returned in earlier extraction rounds. This step is needed because each amplification process followed by measurement produces only one candidate itemset; without exclusion, later rounds may repeatedly return the same itemset instead of discovering the remaining frequent candidates. After each measured candidate is verified, QIL records its candidate index and adds it to an exclusion condition. This condition is incorporated into the threshold-marking oracle, so previously extracted candidates are no longer marked in subsequent rounds. As a result, later amplification steps concentrate on the remaining undiscovered frequent itemsets.

\begin{example}
\label{ex:QIl}
Following Example~\ref{ex:superposition}, QIL operates on \rB{the} MAC \rB{s}uperposition over \(P_1\). Utilizing the threshold-marking oracle from Example~\ref{ex:qsv_circuit_walkthrough}, QIL estimates that \(4\) of the \(8\) states satisfy the support threshold and are marked. Thus, QIL uses \(4\) as the boundary estimate for the subsequent amplification loop. QIL repeatedly invokes \(\mathcal{O}^{(2)}_{\ge\sigma}\) to measure the index register.
Suppose the first measurement returns $\ket{\text{B},\text{C}}_{\rm idx}$, which corresponds to the candidate itemset \(\{\text{B},\text{C}\}\). QIL records this itemset and updates the exclusion condition so that \(\ket{\text{B},\text{C}}_{\rm idx}\) is not marked in subsequent rounds. Repeating this process until four distinct candidates are measured, yielding
\[
L_2 =
\{ \{\text{A},\text{C}\}, \{\text{B},\text{C}\}, \{\text{B},\text{E}\}, \{\text{C},\text{E}\} \}.
\]
Thus, QIL outputs \(L_2\) as the frequent 2-itemsets under the threshold \(\sigma=2\).
\exend
\end{example}
\subsection{Theoretical Analysis}
\label{sec:complexity}

This section analyzes the time complexity of QFM. 
Prior FIM studies have typically assessed practical miners empirically, rather than deriving stage-by-stage time complexity that separately accounts for preprocessing, candidate-domain construction, and frequent-itemset extraction~\cite{vo2012dbv,hamm2023tkde,martin2020ciclad,fang2024gpu,borgelt2023survey}, while the existing quantum attempt qARM provides query complexity only for the per-level support-checking step, leaving candidate generation running on traditional machines. In contrast, we derive the time complexity of QFM.
We reuse the notation of Section~\ref{sec:problem_formulation}; in particular, \(L_k\) and \(C_k\) denote the level-wise frequent set and candidate set, respectively.

With the level-wise bit-vector representation provided by QPR,\footnote{The derivation of QPR's time complexity is included in the proof of Theorem~\ref{thm:complexity_fim}.} we begin with QSV, which constructs the candidate domain and implements the threshold-marking oracle for the next mining level.

\begin{lemma}
\label{lem:qsv_depth}
\rB{The time complexity of one QSV threshold-marking oracle call at level $k+1$ is $\mathcal{O}(\log |L_k|+\log M)$.}
\end{lemma}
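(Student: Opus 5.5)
The plan is to bound the depth of the oracle through its factorization
\[
\mathcal{O}^{(k+1)}_{\ge\sigma}=U_{\rm load}^{\dagger}U_{\wedge}^{\dagger}U_{\rm pop}^{\dagger}U_{\rm cmp}^{\dagger}X_{\rm flag}U_{\rm cmp}U_{\rm pop}U_{\wedge}U_{\rm load}.
\]
Depth is additive under sequential composition. Each inverse block $U^{\dagger}$ has the same depth as $U$, since it is the same circuit with gates reversed and conjugated. So it suffices to bound the depths of $U_{\rm load}$, $U_\wedge$, $U_{\rm pop}$, $U_{\rm cmp}$ and $X_{\rm flag}$ separately and sum them. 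Throughout, time means circuit depth in a model of one- and two-qubit (or constant-fan-in Toffoli) gates with unbounded parallelism. We count only the per-call cost, after the classical instantiation of the bit-vectors $v_x$ for $x\in L_k$ from QPR.

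\textbf{Cheap blocks.} $U_\wedge$ applies $M$ Toffoli gates, one per transaction position $j$, each with controls $(v_u)[j],(v_w)[j]$ and target $\wedge[j]$. These gates act on disjoint qubits, so $U_\wedge$ has depth $\mathcal{O}(1)$. $X_{\rm flag}$ is a single controlled-$X$ from the comparator output bit, also depth $\mathcal{O}(1)$.

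\textbf{Popcount and comparator.} For $U_{\rm pop}$, use a balanced binary tree of reversible adders. The leaves are the $M$ bits of the $\wedge$ register. Level $\ell$ adds pairs of $\ell$-bit partial counts into $(\ell{+}1)$-bit ancilla registers, and there are $\lceil\log_2 M\rceil$ levels. If each adder is ripple-carry, the total depth is $\sum_{\ell}\mathcal{O}(\ell)=\mathcal{O}(\log^2 M)$, so to meet the stated bound each adder must have depth $\mathcal{O}(\log \ell)$ or better. I would use either a carry-lookahead adder (depth $\mathcal{O}(\log \ell)$) together with carry-save $3{:}2$ compressors (Wallace-tree style), which reduce $M$ bits to two numbers in $\mathcal{O}(\log M)$ constant-depth layers followed by one final logarithmic-depth addition. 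That gives $\mathcal{O}(\log M)$ overall. $U_{\rm cmp}$ compares the $(\lceil\log_2 M\rceil{+}1)$-bit register $s$ with the classical constant $\sigma$. It is realized as a carry-lookahead subtraction $s-\sigma$ whose sign bit is the flag, with depth $\mathcal{O}(\log\log M)\subseteq\mathcal{O}(\log M)$. The ancillas these blocks use are cleaned by the mirrored inverse blocks, so they add only a constant factor.

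\textbf{Load block (main obstacle).} $U_{\rm load}$ realizes $\mathcal{O}^{(k)}_{\sf load}$ twice, once for $u$ and once for $w$, in parallel. The index registers have $\lceil\log_2|L_k|\rceil$ qubits, assuming the CSI index $r$ is first mapped to the pair $(u_r,w_r)$ by a fixed classical reversible map, or equivalently that the basis is laid out as $\ket{u,w}$ as in the definition. I would implement each load as a bucket-brigade / fan-out QRAM over the $|L_k|$ classically known vectors. The address bits route through a binary tree of depth $\lceil\log_2|L_k|\rceil$. At the leaves, the $M$ data bits of the selected $v_x$ are XORed into $z$ in parallel, which is constant depth because each leaf writes disjoint target qubits via fan-out copies. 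The address is then unrouted. Total depth is $\mathcal{O}(\log|L_k|)$, with possibly an extra $\mathcal{O}(\log M)$ for fanning the selection signal out to $M$ targets, which is absorbed in the bound. This step is the delicate one, because it relies on assuming a QRAM-type data-access model in which classical data is hardwired into the circuit at width $\mathcal{O}(|L_k| M)$ rather than serialized. I would state this assumption explicitly and cite the standard bucket-brigade depth bound.

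Summing the blocks gives $2\big(\mathcal{O}(\log|L_k|)+\mathcal{O}(1)+\mathcal{O}(\log M)+\mathcal{O}(\log M)\big)+\mathcal{O}(1)=\mathcal{O}(\log|L_k|+\log M)$, which is the claim of Lemma~\ref{lem:qsv_depth}.
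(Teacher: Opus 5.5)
Your proposal is correct and follows the paper's own argument: you bound each block of the factorization separately and sum, with inverse blocks costing the same as their forward versions. The blocks are an $\mathcal{O}(\log|L_k|)$-depth addressed lookup for the load (the paper likewise just assumes an $\mathcal{O}(\log S)$-depth selection circuit over the materialized level-$k$ vectors), a constant-depth bitwise AND, an $\mathcal{O}(\log M)$ popcount via a reduction tree followed by a parallel-prefix addition, and a logarithmic comparator. One small tightening: a plain tree of $\mathcal{O}(\log\ell)$-depth carry-lookahead adders would only give $\mathcal{O}(\log M\log\log M)$, so $\mathcal{O}(\log M)$ comes from the carry-save (Wallace-tree) reduction plus a single final prefix adder, which you do adopt and which matches the paper's reduction-tree-plus-prefix-addition step.
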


\begin{proof}
A threshold-marking oracle call first loads the parent vectors $v_u$ and $v_w$ from the level-$k$ cache,\footnote{\rB{The level-\(k\) cache is the table of materialized transaction-indicator vectors (Section~\ref{subsec:QFc}); the load oracle exposes it to the circuit as a reversible lookup, and addressing a table with \(S\) stored entries is realized by an \(\mathcal{O}(\log S)\)-depth selection circuit. This logarithmic addressing depth is the only cache cost charged in the bounds below.}} \rB{which requires $\mathcal{O}(\log |L_k|)$ time}. The bitwise AND $v_u\wedge v_w$ is bit-parallelizable and therefore contributes $\mathcal{O}(1)$ time. The support count is computed by a popcount circuit; using a parallel reduction tree followed by a parallel-prefix addition, this takes $\mathcal{O}(\log M)$ time. The comparator against $\sigma$ \rB{requires} $\mathcal{O}(\log M)$ time. \rB{The lemma follows.} \end{proof}

\begin{lemma}
\label{lem:qil_complexity}
The time complexity of QIL for level $k+1$ is
\[
\mathcal{O}\bigl(\sqrt{|C_{k+1}|\,\rB{\max\{1,|L_{k+1}|\}}}\,(\log |C_{k+1}|+\rB{\log |L_k|}+\log M)\bigr).
\]
\end{lemma}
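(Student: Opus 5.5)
The plan is to bound QIL as (number of oracle calls) $\times$ (cost per Grover iteration) and then add the boundary-estimation and deduplication overheads. By construction of CSI, each candidate in $C_{k+1}$ corresponds to exactly one parent pair, so $|P_k| = |C_{k+1}|$. The index register therefore has $q_k = \lceil \log_2 |C_{k+1}| \rceil$ qubits and spans $2^{q_k} < 2|C_{k+1}|$ basis states. Padded indices are never marked, so the fraction of marked states is at least $t/(2|C_{k+1}|)$ whenever $t$ candidates are marked.

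One amplification iteration consists of one call to $\mathcal{O}^{(k+1)}_{\ge\sigma}$ plus the diffusion operator $H^{\otimes q_k}(2\ket{0}\bra{0}-I)H^{\otimes q_k}$.
\begin{itemize}
\item By Lemma~\ref{lem:qsv_depth}, the oracle call costs $\mathcal{O}(\log|L_k|+\log M)$.
\item The diffusion is a $q_k$-controlled phase flip. Using a tree of Toffoli gates with ancillas, it costs $\mathcal{O}(\log|C_{k+1}|)$ time.
\item The deduplication exclusion predicate compares the index register against a list of previously found indices. I plan to implement it as a lookup or selection over the index register, in the same way the load oracle addresses the cache, so it also costs $\mathcal{O}(\log|C_{k+1}|)$ depth.
\end{itemize}
Each iteration therefore costs $\mathcal{O}(\log|C_{k+1}|+\log|L_k|+\log M)$. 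This is exactly the polylog factor in the claimed bound.

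For the number of iterations, I will use the standard listing argument. Let $n = |C_{k+1}|$ and $t = |L_{k+1}|$. Suppose $j$ frequent candidates have already been extracted, so $t-j$ remain marked. Grover search with an unknown number of marked items (the BBHT exponential-schedule procedure, which is what boundary estimation implements) finds one marked item in expected $\mathcal{O}(\sqrt{n/(t-j)})$ oracle calls. Summing over rounds gives
\[
\sum_{m=1}^{t}\sqrt{n/m} = \mathcal{O}(\sqrt{nt}).
\]
Certifying that nothing remains, either at the end or when $t=0$, costs $\mathcal{O}(\sqrt{n})$ calls, since BBHT with a cap of $\mathcal{O}(\sqrt n)$ iterations terminates with a bounded-error "none" answer. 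Together these give $\mathcal{O}(\sqrt{n\max\{1,t\}})$ calls.

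Each measured candidate is verified classically, or by a single further oracle evaluation, before it is added to the exclusion list. This verification is only $\mathcal{O}(\log|L_k|+\log M)$ per round and is therefore absorbed into the bound. Multiplying the call count by the per-iteration cost yields the stated bound.

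The main obstacle is making the summation rigorous when $t$ is unknown and the marked count shrinks over rounds. I would first try to invoke the BBHT expected-time bound round by round and sum the expectations. Boundary estimation (quantum counting) could instead fix $t$ to within a constant factor using $\mathcal{O}(\sqrt{nt})$ calls, but this must not add a $1/\epsilon$ factor. A further point is that the per-round success probabilities must union-bound to constant overall success. This can be handled by repeating each round $\mathcal{O}(1)$ times, or by noting that expected running times add linearly.
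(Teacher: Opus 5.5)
Your proposal is correct and follows the paper's proof. Both arguments bound QIL by the number of amplitude-amplification iterations, $\mathcal{O}(\sqrt{|C_{k+1}|\max\{1,|L_{k+1}|\}})$, obtained from listing with an exclusion set plus counting or doubling search for the unknown $|L_{k+1}|$, with the $\max\{1,\cdot\}$ term covering the empty terminal level. This count is multiplied by a per-iteration cost of $\mathcal{O}(\log|C_{k+1}|)$ for index-register work (including the exclusion check) plus $\mathcal{O}(\log|L_k|+\log M)$ for the oracle call from Lemma~\ref{lem:qsv_depth}.

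Your round-by-round summation of $\sqrt{n/m}$ is more explicit than the paper, which only asserts the bound by citation. One caveat: repeating each capped round $\mathcal{O}(1)$ times does not union-bound over $|L_{k+1}|$ rounds. A clean fix is to first obtain $|L_{k+1}|$ exactly by quantum counting, which costs $\mathcal{O}(\sqrt{(|L_{k+1}|+1)|C_{k+1}|})$ and so fits the same budget. Every round then runs until a verified success, the expected costs add, and Markov's inequality gives bounded error.
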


\begin{proof}
QIL employs amplitude amplification~\cite{grover1996fast} to generate marked candidates. 
If \(|L_{k+1}|\) is not known in advance, QIL first obtains an upper bound using standard quantum counting~\cite{brassard2002quantum} or a doubling-style budget search over the same marking oracle. This estimation step and the subsequent listing iterations share the same MAC \rB{s}uperposition preparation and threshold-marking oracle. Following previous analyses of quantum counting and doubling search~\rB{\cite{brassard2002quantum,boyer1998tight}}, the estimation step adds only logarithmic-factor overhead, including the repetitions needed to raise the success probability to the bounded-error level. 
Specifically, listing \(|L_{k+1}|\) marked candidates from a candidate space of size \(|C_{k+1}|\) requires $\mathcal{O}\!\left(\sqrt{|C_{k+1}|\max\{1,|L_{k+1}|\}}\right)$
oracle iterations due to the following reasons. The term \(\max\{1,|L_{k+1}|\}\) covers the terminating level, where \(L_{k+1}=\emptyset\) and \(\mathcal{O}(\sqrt{|C_{k+1}|})\) oracle calls are required to decide that no marked candidate remains.
Each iteration consists of two parts. First, the non-oracle operations on the candidate index register require \(\mathcal{O}(\log |C_{k+1}|)\) time, including the exclusion-set check. This check can be implemented as a parallel comparator tree over at most \(|L_{k+1}|\) stored indices, and therefore its running time remains within the same logarithmic factor. Second, each iteration invokes the QSV threshold-marking oracle once, with \(\mathcal{O}(\log |L_k|+\log M)\) \rB{time} by Lemma~\ref{lem:qsv_depth}. The lemma follows.
\end{proof}

The above bound accounts for the process time of QIL; it remains to state the success guarantee under which the extracted candidates form valid frequent itemsets with high probability. The guarantee follows the standard bounded-error guarantee of amplitude amplification~\cite{grover1996fast,boyer1998tight}. With the iteration count determined from the boundary estimate, each extraction attempt returns a marked candidate with constant probability, while the boundary estimate follows the accuracy guarantee of quantum counting~\cite{brassard2002quantum}. After measurement, the returned candidate is verified directly from the parent vectors before being inserted into \(L_{k+1}\), so false positives are excluded from the output, ensuring that every reported itemset satisfies the support threshold. Repeating the extraction procedure reduces the remaining failure probability exponentially~\cite{nielsen2010quantum}, and the expected constant number of repetitions is absorbed in the complexity bound of Lemma~\ref{lem:qil_complexity}.

By combining the costs from all stages and summing over iterations from $k=1$ to $\hat{k}-1$, we derive the time complexity for FIM.

\begin{theorem}[Time complexity of QFM]
\label{thm:complexity_fim}
Let $B_{\hat{k}}=\sum_{k=1}^{\hat{k}-1}|C_{k+1}|$ denote the total number of level-wise candidates explored up to length $\hat{k}
\footnote{$B_{\hat{k}}$ is induced by the data and the threshold rather than being an input: it can be as small as $\binom{|L_1|}{2}=\mathcal{O}(N^2)$ when mining terminates at the second level (sparse data or high support); under practical support thresholds it typically stays polynomial, far below the exponential regime; it degenerates to $\Theta(2^N)$ only when $\sigma$ is so low that nearly every itemset is frequent. Please refer to~\onlineref{sec:comparison} for the detailed output-sensitive analysis.}
$\rG{, and let $Q_{\hat{k}}=\sum_{k=1}^{\hat{k}-1}\sqrt{|C_{k+1}|\,\max\{1,|L_{k+1}|\}}$ denote its output-sensitive refinement}. The time complexity of QFM is
\[
\rR{\mathcal{O}\!\left(NM+Q_{\hat{k}}(\log B_{\hat{k}}+\log M)\right).}
\]
\rR{Since $\max\{1,|L_{k+1}|\}\le|C_{k+1}|$ for every explored level, $Q_{\hat{k}}\le B_{\hat{k}}$, and the bound is in particular $\mathcal{O}(NM+B_{\hat{k}}(\log B_{\hat{k}}+\log M))$.}
\end{theorem}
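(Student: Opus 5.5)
The bound splits into three costs, which I will handle separately and then add: the one-time preprocessing (QPR), the per-level candidate-domain construction (QSV), and the per-level listing (QIL).

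\textbf{QPR.} Counting $1$-item supports with $\mathcal{O}_{\sf count}$ and amplitude estimation costs $\mathcal{O}(\sqrt{M})$ oracle calls per item, i.e.\ $\mathcal{O}(N\sqrt{M})$ in total. This term is not the dominant one. Materializing the Bit-Vector Qubit Encoding requires writing the transaction-indicator vectors $v_i\in\{0,1\}^M$ for all items. That amounts to one pass over the incidence matrix $X$, which has $NM$ entries, so it costs $\mathcal{O}(NM)$. Since $N\sqrt{M}\le NM$, QPR contributes $\mathcal{O}(NM)$.

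\textbf{QSV.} At each level $k$, QSV prepares $H^{\otimes q_k}$ in $\mathcal{O}(1)$ depth and fixes the CSI map $\pi_k$ together with the level-$k$ cache.
\begin{itemize}
\item The Hadamard layer is dominated by anything else in the bound.
\item Building the cache of vectors $v_{u\cup w}$ for each $u\cup w\in L_k$ only touches verified frequent itemsets. Each vector costs $M$ bit operations, so I would charge it to the listing of that itemset at level $k$.
\item This gives $\mathcal{O}(M)$ per listed itemset. It can be absorbed if it is done bit-parallel in $\mathcal{O}(1)$ depth as an AND of parent vectors, which is the same convention Lemma~\ref{lem:qsv_depth} uses for $U_\wedge$.
\end{itemize}
I therefore account for it at unit depth per listed itemset, which is at most $|L_{k+1}|\le Q$-type terms. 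The same applies to post-measurement verification of a returned candidate: it is one oracle-style evaluation of depth $\mathcal{O}(\log|L_k|+\log M)$, so it adds at most $|L_{k+1}|$ extra oracle-cost terms. These are dominated by $\sqrt{|C_{k+1}|\,|L_{k+1}|}$ because $|L_{k+1}|\le|C_{k+1}|$.

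\textbf{QIL.} Lemma~\ref{lem:qil_complexity} gives the level-$(k{+}1)$ cost
\[
\sqrt{|C_{k+1}|\max\{1,|L_{k+1}|\}}\,\bigl(\log|C_{k+1}|+\log|L_k|+\log M\bigr).
\]
I would bound the logarithmic factor uniformly in $k$.
\begin{itemize}
\item $|C_{k+1}|\le B_{\hat k}$ holds by the definition of $B_{\hat k}$.
\item For $\log|L_k|$ with $k\ge2$, we have $L_k\subseteq C_k$, so $|L_k|\le B_{\hat k}$.
\item For $k=1$, note that $|C_2|=\binom{|L_1|}{2}$ (all pairs are prefix-sharing at level $1$). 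Hence $\log|L_1|=\mathcal{O}(\log|C_2|)=\mathcal{O}(\log B_{\hat k})$ whenever $|L_1|\ge2$. If $|L_1|\le1$, mining stops immediately and the term is $\mathcal{O}(1)$.
\end{itemize}
Summing over $k=1,\dots,\hat k-1$ then gives $Q_{\hat k}(\log B_{\hat k}+\log M)$. Adding the QPR term yields the claimed bound.

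\textbf{Corollary.} The final statement follows termwise. For every explored level, $\max\{1,|L_{k+1}|\}\le|C_{k+1}|$ since $C_{k+1}\neq\emptyset$. Hence $\sqrt{|C_{k+1}|\max\{1,|L_{k+1}|\}}\le|C_{k+1}|$, and summing gives $Q_{\hat k}\le B_{\hat k}$.

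The main obstacle is the bookkeeping in the QSV and QPR steps. The difficulty is justifying that cache materialization and the classical construction of $P_k$ do not introduce terms beyond $NM$ and $Q_{\hat k}$. Enumerating prefix-sharing pairs could cost $|C_{k+1}|$ classically, which is not bounded by $Q_{\hat k}$. I would address this first by arguing that CSI indexing can be computed on the fly: the index $r$ decodes to $(u_r,w_r)$ via a prefix-group offset table in $\mathcal{O}(\log|C_{k+1}|)$ depth, which is already charged inside Lemma~\ref{lem:qil_complexity}. Failing that, I would fall back to the weaker $B_{\hat k}$ form of the statement.
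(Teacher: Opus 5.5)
Your proposal is correct and follows the paper's proof. Charge $\mathcal{O}(NM)$ for QPR, which dominates the $\mathcal{O}(N\sqrt{M})$ oracle calls; sum Lemma~\ref{lem:qil_complexity} over $k=1,\dots,\hat{k}-1$; bound every logarithmic factor by $\mathcal{O}(\log B_{\hat{k}})$; and get the relaxed form from $\max\{1,|L_{k+1}|\}\le|C_{k+1}|$.

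Your extra bookkeeping is more careful than the paper's. This covers the $k=1$ case via $|C_2|=\binom{|L_1|}{2}$, post-measurement verification, cache materialization, and on-the-fly CSI decoding. The paper only asserts that $|L_k|$ is bounded by the level-wise search domain, and it never charges the construction of $P_k$, consistent with charging only $\mathcal{O}(\log S)$ addressing depth for tables. So your primary route already suffices, and the $B_{\hat{k}}$ fallback is unnecessary.
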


\begin{proof}
QPR requires \(\mathcal{O}(NM)\) time by materializing the transaction-indicator vectors for all \(N\) items over \(M\) transactions. QPR then reduces the oracle-call count for identifying frequent 1-itemsets to \(\mathcal{O}(N\sqrt{M})\). Afterward, summing up the QIL time complexity in Lemma~\ref{lem:qil_complexity} gives
\[
\begin{aligned}
\sum_{k=1}^{\hat{k}-1}\mathcal{O}\Bigl(&\sqrt{|C_{k+1}|\,\max\{1,|L_{k+1}|\}}\\[-2pt]
&\;\cdot\bigl(\log |C_{k+1}|+\log |L_k|+\log M\bigr)\Bigr).
\end{aligned}
\]
Since $|L_k|$ is no larger than the corresponding level-wise search domain, both logarithmic factors are bounded by $\mathcal{O}(\log B_{\hat{k}})$. Therefore, the iterative part is bounded by $\mathcal{O}(Q_{\hat{k}}(\log B_{\hat{k}}+\log M))$. Adding the QPR cost yields the stated bound, and the relaxed form follows from $Q_{\hat{k}}\le B_{\hat{k}}$. The theorem follows. 
\end{proof}

\section{Experiments}
\label{sec:experiments}

\subsection{Experimental Setup}

\textbf{Datasets and Baselines.} 
We conduct experiments on eleven standard benchmarks for frequent itemset mining~\cite{dua2017uci,fournier2014spmf}. Table~\ref{tab:dataset-summary} summarizes the sizes of these datasets.
We compare QFM against six baselines: one quantum FIM algorithm, qARM~\cite{yu2022experimental}; four classical FIM or generalized itemset-mining algorithms, Eclat~\cite{zaki2000scalable}, FP-growth~\cite{han2000fp}, HAMM~\cite{hamm2023tkde}, 
and CICLAD~\cite{martin2020ciclad}; and one GPU-parallel algorithm, PHA-HUIM~\cite{fang2024gpu}.\footnote{HAMM and PHA-HUIM target high-utility itemset mining, a generalized itemset mining formulation. We instantiate their evaluation criterion with support-based filtering, making them directly applicable under the standard minimum-support FIM setting.}

\textbf{Metrics and Implementation.} 
Following~\cite{suchara2013overhead,litinski2019game,gidney2021factor,krol2024assessing}, we evaluate QFM using the following metrics: \textit{time}, \textit{logical depth}, and \textit{logical gate count}. We also conduct sensitivity analyses by varying the transaction ratio $\tau$ (percentage of transactions retained) and the minimum-support ratio $\sigma$ (defined with respect to the retained transactions). We further conduct ablation studies on QFM. We implement QFM in IBM Qiskit~\cite{QiskitCommunity2017} and Amazon Braket for our evaluation. Experimental results on Amazon Braket, quantum error measurements, peak logical-qubit footprint (i.e., the maximum logical-qubit usage in the circuit), and additional datasets are presented in~\onlineref{sec:additional_exp} due to space constraints. All experiments are conducted on an HPE ProLiant DL580 Gen9 server equipped with four 2.10 GHz Intel Xeon E7-8870 v4 CPUs, totaling 80 physical cores, and 1 TB of RAM.
The implementation and experiment scripts are available at \url{https://github.com/yxu616425/QFM_experiment}.

\begin{table}[t]
\centering
\caption{Summary of datasets. $M$ and $N$ denote the numbers of transactions and items, respectively.}
\label{tab:dataset-summary}
\setlength{\tabcolsep}{3.5pt}
\renewcommand{\arraystretch}{1.05}
\begin{tabular}{lrrrrr}
\toprule
Dataset & \textit{Mushroom} & \textit{Nursery} & \textit{Tic-Tac-Toe} & \textit{Car} & \textit{Connect-4} \\
\midrule
$M$ & 8124 & 12960 & 958 & 1728 & 67557 \\
$N$ & 22   & 8     & 10  & 6    & 42    \\
\bottomrule
\end{tabular}
\par\smallskip
\begin{tabular}{lrrrrrr}
\toprule
\rR{Dataset} & \textit{Accidents} & \textit{Pumsb} & \textit{Pumsb-star} & \textit{Chess} & \textit{Bike} & \textit{Skin} \\
\midrule
$M$ & 340183 & 49046 & 49046 & 3196 & 21078 & 245057 \\
$N$ & 468    & 2113  & 2088  & 75   & 67    & 12     \\
\bottomrule
\end{tabular}
\par\smallskip
\end{table}

\begin{table*}[t]
\centering
\footnotesize 
\renewcommand{\arraystretch}{0.88}
\setlength{\tabcolsep}{2.3pt}
\caption{Time \rB{(seconds)} comparison across transaction ratios.} 
\label{tab:baseline_tx}
\begin{tabular}{l|ccccc|ccccc}
\toprule
\textbf{Algo} & \multicolumn{5}{c|}{\textit{Mushroom}} & \multicolumn{5}{c}{\textit{Tic-Tac-Toe}} \\
& 30\% & 40\% & 50\% & 60\% & 70\% & 30\% & 40\% & 50\% & 60\% & 70\% \\
\midrule
\rR{QFM} & \textbf{2.09e-3} & \textbf{9.56e-3} & \textbf{0.06} & \textbf{0.07} & \textbf{0.07} & \textbf{3.47e-5} & \textbf{4.77e-5} & \textbf{6.63e-5} & \textbf{8.82e-5} & \textbf{1.12e-4} \\
qARM & 1.20 & 5.10 & 44.72 & 57.56 & 101.93 & 5.88e-3 & 0.01 & 0.02 & 0.04 & 0.08 \\
Eclat & 1.10 & 1.37 & 4.04 & 5.46 & 6.66 & 0.67 & 0.76 & 0.71 & 0.62 & 0.72 \\
FP-growth & 0.79 & 0.88 & 2.24 & 2.18 & 4.18 & 0.79 & 0.76 & 0.70 & 0.82 & 0.85 \\
HAMM & \underline{0.03} & \underline{0.13} & 1.65 & 1.94 & 3.15 & \underline{2.00e-3} & \underline{3.00e-3} & \underline{5.00e-3} & \underline{7.00e-3} & \underline{9.00e-3} \\
CICLAD & 21.95 & 44.48 & 51.98 & 71.23 & 85.59 & 0.18 & 0.29 & 0.30 & 0.43 & 0.43 \\
PHA-HUIM & 0.22 & 0.24 & \underline{0.27} & \underline{0.30} & \underline{0.32} & 0.13 & 0.14 & 0.13 & 0.13 & 0.13 \\
\midrule
\textbf{Algo} & \multicolumn{5}{c|}{\textit{Nursery}} & \multicolumn{5}{c}{\textit{Car}} \\
& 30\% & 40\% & 50\% & 60\% & 70\% & 30\% & 40\% & 50\% & 60\% & 70\% \\
\midrule
\rR{QFM} & \textbf{8.84e-5} & \textbf{1.38e-4} & \textbf{1.58e-4} & \textbf{1.74e-4} & \textbf{2.19e-4} & \textbf{2.49e-5} & \textbf{3.18e-5} & \textbf{3.81e-5} & \textbf{4.69e-5} & \textbf{5.27e-5} \\
qARM & \underline{7.00e-3} & \underline{0.01} & \underline{0.01} & \underline{0.03} & \underline{0.05} & 2.58e-3 & 7.06e-3 & 0.01 & 0.02 & 0.02 \\
Eclat & 0.75 & 0.78 & 0.80 & 0.78 & 0.77 & 0.72 & 0.81 & 0.62 & 0.79 & 0.76 \\
FP-growth & 1.15 & 2.17 & 2.28 & 0.76 & 0.92 & 0.82 & 0.78 & 0.68 & 0.71 & 0.81 \\
HAMM & 0.02 & 0.03 & 0.04 & \underline{0.03} & 0.06 & \underline{2.00e-3} & \underline{3.00e-3} & \underline{5.00e-3} & \underline{7.00e-3} & \underline{8.00e-3} \\
CICLAD & 7.35 & 9.95 & 14.86 & 19.47 & 24.23 & 0.05 & 0.08 & 0.11 & 0.16 & 0.19 \\
PHA-HUIM & 0.19 & 0.22 & 0.24 & 0.25 & 0.27 & 0.13 & 0.13 & 0.17 & 0.17 & 0.16 \\
\bottomrule
\end{tabular}
\end{table*}

\begin{table*}[t]
\centering
\footnotesize 
\renewcommand{\arraystretch}{0.88}
\setlength{\tabcolsep}{2.3pt}
\caption{Time \rB{(seconds)} comparison across minimum support thresholds.} 
\label{tab:baseline_ms}
\begin{tabular}{l|ccccc|ccccc}
\toprule
\textbf{Algo} & \multicolumn{5}{c|}{\textit{Mushroom}} & \multicolumn{5}{c}{\textit{Tic-Tac-Toe}} \\
& 10\% & 20\% & 30\% & 40\% & 50\% & 10\% & 20\% & 30\% & 40\% & 50\% \\
\midrule
\rR{QFM} & \textbf{0.12} & \textbf{8.88e-3} & \textbf{1.06e-3} & \textbf{6.64e-4} & \textbf{5.50e-4} & \textbf{1.70e-4} & \textbf{5.41e-5} & \textbf{4.52e-5} & \textbf{3.69e-5} & \textbf{3.69e-5} \\
qARM & 594.67 & 47.23 & 2.05 & 0.40 & 0.10 & 0.16 & 0.02 & 9.80e-3 & \underline{2.31e-3} & 1.39e-3 \\
Eclat & 30.05 & 8.59 & 1.75 & 1.29 & 0.92 & 0.84 & 0.60 & 0.62 & 0.63 & 0.64 \\
FP-growth & 1.73 & 1.28 & 0.99 & 0.86 & 0.79 & 0.79 & 0.82 & 0.72 & 0.64 & 0.72 \\
HAMM & 2.40 & 0.43 & \underline{0.09} & \underline{0.07} & \underline{0.07} & \underline{0.02} & \underline{0.01} & \underline{7.00e-3} & 4.00e-3 & \underline{1.00e-3} \\
CICLAD & 235.23 & 235.23 & 147.67 & 147.67 & 147.67 & 1.02 & 1.02 & 1.02 & 1.02 & 1.02 \\
PHA-HUIM & \underline{0.43} & \underline{0.40} & 0.37 & 0.35 & 0.33 & 0.16 & 0.15 & 0.15 & 0.16 & 0.14 \\
\midrule
\textbf{Algo} & \multicolumn{5}{c|}{\textit{Nursery}} & \multicolumn{5}{c}{\textit{Car}} \\
& 10\% & 20\% & 30\% & 40\% & 50\% & 10\% & 20\% & 30\% & 40\% & 50\% \\
\midrule
\rR{QFM} & \textbf{2.88e-4} & \textbf{1.96e-4} & \textbf{1.94e-4} & \textbf{1.59e-4} & \textbf{1.59e-4} & \textbf{7.84e-5} & \textbf{5.64e-5} & \textbf{4.99e-5} & \textbf{4.37e-5} & \textbf{4.37e-5} \\
qARM & 0.11 & \underline{0.02} & \underline{0.02} & \underline{0.01} & \underline{0.01} & 0.04 & \underline{0.01} & \underline{5.12e-3} & \underline{1.69e-3} & \underline{1.69e-3} \\
Eclat & 0.89 & 0.73 & 0.70 & 0.81 & 0.73 & 0.68 & 0.64 & 0.58 & 0.63 & 0.65 \\
FP-growth & 0.93 & 0.86 & 0.68 & 0.75 & 0.70 & 0.73 & 0.80 & 0.64 & 0.67 & 0.60 \\
HAMM & \underline{0.09} & 0.08 & 0.06 & 0.06 & 0.05 & \underline{0.01} & \underline{0.01} & 7.00e-3 & 6.00e-3 & 6.00e-3 \\
CICLAD & 40.56 & 40.56 & 40.56 & 40.56 & 40.56 & 0.47 & 0.47 & 0.47 & 0.47 & 0.47 \\
PHA-HUIM & 0.34 & 0.33 & 0.30 & 0.28 & 0.28 & 0.15 & 0.16 & 0.15 & 0.15 & 0.16 \\
\bottomrule
\end{tabular}
\end{table*}

\subsection{Quantitative Results}

Table~\ref{tab:baseline_tx} reports time under varied transaction ratios $\tau$. For all datasets, QFM consistently outperforms all baselines. As $\tau$ increases, the overall time generally grows due to the enlarged transaction database. Nevertheless, QFM exhibits only mild growth because its support verification relies on the threshold-marking oracle with logarithmic depth, preventing the verification cost from scaling linearly with the transaction count. In contrast, the classical baselines show pronounced growth, especially on denser datasets such as \textit{Mushroom} and \textit{Nursery}. Dense datasets intrinsically enlarge the candidate space and intensify repeated support-checking overhead, making the cost of classical mining procedures increase rapidly due to their reliance on sequential database scans. Similarly, PHA-HUIM exhibits mixed trends across datasets, since its GPU performance remains constrained by classical data movement bottlenecks, irregular memory access, and repeated verification costs, which are particularly visible on \textit{Nursery}. Finally, qARM remains severely limited since it merely delegates support evaluation to a quantum subroutine while leaving candidate generation to classical machine bottlenecks. Consequently, it avoids severe performance degradation only under extremely small $\tau$ where the search space is trivially simple. As the transaction volume grows, its hybrid overhead quickly dominates, rendering it uncompetitive. This contrast is most evident on \textit{Nursery}. While all baselines are burdened by dense transaction structures and repeated verification, QFM leverages \rB{the} MAC \rB{s}uperposition to substantially compress the search space and applies its logarithmic-depth threshold-marking oracle to verify support efficiently, jointly enabling QFM to maintain a clear and robust advantage as $\tau$ increases.

Table~\ref{tab:baseline_ms} reports time under varied minimum-support ratios $\sigma$. Across all datasets, QFM consistently achieves the best performance. As $\sigma$ increases, the mining workload naturally decreases due to tighter support constraints. For QFM, a higher $\sigma$ sharply reduces the number of valid parent itemsets, allowing \rB{the} MAC \rB{s}uperposition to instantiate a significantly more compact CSI and minimize state preparation overhead. Furthermore, when fewer candidates satisfy the support threshold, QIL needs fewer quantum amplification rounds to enumerate them. Consequently, QFM maintains a remarkably low and stable execution time as $\sigma$ grows, exhibiting a particularly sharp drop on \textit{Mushroom}. While the classical and GPU baselines also experience reduced workloads at higher $\sigma$, they fail to scale down their execution time as effectively. Their performance remains bottlenecked by the inherent overhead of explicit candidate enumeration and sequential memory traversals, which persist even when the valid itemset count shrinks. CICLAD presents a notable exception to this trend, showing little sensitivity to $\sigma$, as its runtime is almost entirely dominated by classical stream-processing initialization rather than the mining process itself. Finally, the performance of qARM remains limited because it still delegates candidate generation to the classical CPU, allowing it to avoid severe degradation only in extreme high-support scenarios where the search space is trivially sparse.

\begin{figure}[t]
\centering
{\revisioncolor{yellow!50!black}%
\begin{tikzpicture}
\begin{axis}[
    ymode=log,
    log basis y={10},
    width=\columnwidth,
    height=5.2cm,
    ymin=0.05,
    ymax=120000,
    ylabel={Time (s, log scale)},
    xmin=-0.75,
    xmax=8.95, 
    xtick={0,2.6,5.2,7.8},
    xticklabels={Accidents,Connect-4,Pumsb,Chess},
    xticklabel style={font=\scriptsize, rotate=12, anchor=east, yshift=-2mm},
    ymajorgrids=true,
    grid style={dashed,gray!35},
    legend style={at={(0.5,1.12)}, anchor=south, legend columns=3, draw=none, font=\scriptsize},
    tick label style={font=\scriptsize},
    label style={font=\scriptsize},
    axis on top=false,
]
\addlegendimage{area legend, draw=black, fill=black!8}\addlegendentry{QFM}
\addlegendimage{area legend, draw=black, fill=black!28}\addlegendentry{qARM}
\addlegendimage{area legend, draw=black, fill=black!45}\addlegendentry{FP-growth}
\addlegendimage{area legend, draw=black, fill=black!60}\addlegendentry{HAMM}
\addlegendimage{area legend, draw=black, fill=black!75}\addlegendentry{Eclat}
\addlegendimage{area legend, draw=rkR, fill=black!92}\addlegendentry{PHA-HUIM} 

\def\figbase{0.05}
\newcommand{\timebar}[4]{\filldraw[fill=#4, draw=black, line width=0.30pt] (axis cs:#1,\figbase) rectangle (axis cs:#2,#3);}

\draw[densely dashed, gray!38] (axis cs:1.30,\figbase) -- (axis cs:1.30,120000);
\draw[densely dashed, gray!38] (axis cs:3.90,\figbase) -- (axis cs:3.90,120000);
\draw[densely dashed, gray!38] (axis cs:6.50,\figbase) -- (axis cs:6.50,120000);

\timebar{-0.50}{-0.34}{51.71}{black!8}
\timebar{-0.25}{-0.09}{24162.49}{black!28}
\timebar{0.00}{0.16}{169.12}{black!45}
\timebar{0.25}{0.41}{197.53}{black!60}
\timebar{0.50}{0.66}{30525.81}{black!75}
\filldraw[fill=black!92, draw=rkR, line width=0.5pt] (axis cs:0.75,\figbase) rectangle (axis cs:0.91,147.51); 

\timebar{2.10}{2.26}{11.43}{black!8}
\timebar{2.35}{2.51}{5481.65}{black!28}
\timebar{2.60}{2.76}{45.69}{black!45}
\timebar{2.85}{3.01}{31.60}{black!60}
\timebar{3.10}{3.26}{7509.78}{black!75}
\filldraw[fill=black!92, draw=rkR, line width=0.5pt] (axis cs:3.35,\figbase) rectangle (axis cs:3.51,22.417); 

\timebar{4.70}{4.86}{61.62}{black!8}
\timebar{4.95}{5.11}{37464.10}{black!28}
\timebar{5.20}{5.36}{206.04}{black!45}
\timebar{5.45}{5.61}{202.93}{black!60}
\timebar{5.70}{5.86}{36112.53}{black!75}
\filldraw[fill=black!92, draw=rkR, line width=0.5pt] (axis cs:5.95,\figbase) rectangle (axis cs:6.11,98); 

\timebar{7.30}{7.46}{0.105}{black!8}
\timebar{7.55}{7.71}{38.03}{black!28}
\timebar{7.80}{7.96}{1.329}{black!45}
\timebar{8.05}{8.21}{2.283}{black!60}
\timebar{8.30}{8.46}{6.124}{black!75}
\filldraw[fill=black!92, draw=rkR, line width=0.5pt] (axis cs:8.55,\figbase) rectangle (axis cs:8.71,0.439); 
\end{axis}
\end{tikzpicture}%
}

\caption{Time comparison on various large datasets.}
\label{fig:large_time_rebuttal}
\end{figure}

\begin{table}[t]
\centering
\setlength{\tabcolsep}{3pt}
\caption{Quantum resource usage comparison. 
\rB{The qARM compilations exceed the per-circuit gate limit of the execution platform (Amazon Braket); the affected compiled metrics are reported as truncated lower bounds ($>$30000).}}
\label{tab:yu_quantum_resource}
\begin{tabular}{lrrrr}
\toprule
\multirow{2}{*}{Method}
& \multicolumn{2}{c}{\textit{Bike}}
& \multicolumn{2}{c}{\textit{Skin}} \\
\cmidrule(lr){2-3} \cmidrule(lr){4-5}
& Depth & Gate count
& Depth & Gate count \\
\midrule
QFM  & 1460       & 14760       & 1820       & 13720 \\
qARM & $>30000$ & $>30000$ & $>30000$ & $>30000$ \\
\bottomrule
\end{tabular}
\end{table}

\begin{table*}[t]
\centering
\small 
\renewcommand{\arraystretch}{0.95}
\setlength{\tabcolsep}{4pt}
\caption[Ablation study.]{Ablation study \rR{(seconds; upper block varies $\tau$, lower block varies $\sigma$).}} 
\label{tab:ablation_tx}\label{tab:ablation_ms}
\begin{tabular}{l|ccccc|ccccc}
\toprule
\multicolumn{11}{c}{\rR{(a) Varying transaction ratio $\tau$}} \\
\midrule
\textbf{Variant} & \multicolumn{5}{c|}{\textit{Mushroom}} & \multicolumn{5}{c}{\textit{Nursery}} \\
& 30\% & 40\% & 50\% & 60\% & 70\% & 30\% & 40\% & 50\% & 60\% & 70\% \\
\midrule
\rR{QFM} & \textbf{2.09e-3} & \textbf{9.56e-3} & \textbf{0.06} & \textbf{0.07} & \textbf{0.07}  & \textbf{8.84e-5} & \textbf{1.38e-4} & \textbf{1.58e-4} & \textbf{1.74e-4} & \textbf{2.19e-4}  \\
\rR{w/o QIL} & \underline{2.28e-3} & 0.01 & \underline{0.07} & \underline{0.08} & \underline{0.08}  & \underline{8.85e-5} & \underline{1.82e-4} & \underline{2.16e-4} & \underline{2.41e-4} & \underline{3.01e-4}  \\
\rR{w/o QPR} & 2.38e-3 & \underline{9.99e-3} & \textbf{0.06} & \textbf{0.07} & \textbf{0.07}  & 3.02e-4 & 4.41e-4 & 5.52e-4 & 6.62e-4 & 8.06e-4  \\
\rR{w/o \rB{QSV}} & 0.03 & 0.13 & 0.65 & 0.82 & 0.85  & 9.50e-5 & 3.57e-4 & 4.29e-4 & 4.85e-4 & 8.46e-4  \\
\midrule
\multicolumn{11}{c}{\rR{(b) Varying minimum support threshold $\sigma$}} \\
\midrule
\textbf{Variant} & \multicolumn{5}{c|}{\textit{Mushroom}} & \multicolumn{5}{c}{\textit{Nursery}} \\
& 10\% & 20\% & 30\% & 40\% & 50\% & 10\% & 20\% & 30\% & 40\% & 50\% \\
\midrule
\rR{QFM} & \textbf{0.12} & \textbf{8.88e-3} & \textbf{1.06e-3} & \textbf{6.64e-4} & \textbf{5.50e-4}  & \textbf{2.88e-4} & \textbf{1.96e-4} & \textbf{1.94e-4} & \textbf{1.59e-4} & \textbf{1.59e-4}  \\
\rR{w/o QIL} & \underline{0.13} & \underline{9.49e-3} & \underline{1.14e-3} & \underline{6.88e-4} & \underline{5.55e-4}  & \underline{4.08e-4} & 2.77e-4 & 2.16e-4 & \textbf{1.59e-4} & \textbf{1.59e-4}  \\
\rR{w/o QPR} & \textbf{0.12} & 0.01 & 2.99e-3 & 2.60e-3 & 2.48e-3  & 1.17e-3 & 1.07e-3 & 1.07e-3 & \underline{1.04e-3} & \underline{1.04e-3}  \\
\rR{w/o \rB{QSV}} & 1.36 & 0.12 & 8.90e-3 & 2.66e-3 & 1.08e-3  & 1.46e-3 & \underline{2.43e-4} & \underline{2.07e-4} & \textbf{1.59e-4} & \textbf{1.59e-4}  \\
\bottomrule
\end{tabular}
\end{table*}

Figure~\ref{fig:large_time_rebuttal} demonstrates the running time on larger benchmark datasets.\footnote{CICLAD is omitted because it ran for more than 24 hours.} Consistent with the results in Tables~\ref{tab:baseline_tx} and~\ref{tab:baseline_ms}, QFM continues to outperform qARM and the classical and GPU baselines across these datasets. Even as the database size increases, \rB{the} MAC \rB{s}uperposition remains effective by transforming the search space into a contiguous register via CSI, circumventing the deep conditional logic needed for vast itemset candidates. Furthermore, the logarithmic-depth threshold-marking oracle ensures sustained efficiency through the parallel popcount accumulator, which evaluates the support threshold across all transactions simultaneously.

\subsection{Quantum Resource Usage Comparison}
Table~\ref{tab:yu_quantum_resource} further compares the quantum resource usage of QFM and qARM on \textit{Bike} and \textit{Skin}. QFM substantially reduces both logical depth and logical gate count, indicating lower sequential and total logical operation costs. In particular, QFM remains below 2000 in depth and below 15000 in gate count, whereas qARM exceeds 30000 in both metrics. This reduction comes from the designs of QFM. \rB{The} MAC \rB{s}uperposition avoids constructing complex circuits over the search space, while the logarithmic-depth threshold-marking oracle keeps support verification shallow through bitwise-AND, parallel popcount accumulation, and logarithmic-depth comparison. As a result, QFM produces quantum circuits that are both shallower and lighter than those of qARM.

\subsection{Ablation Study}

To assess the contribution of each quantum module in QFM, we compare QFM with three ablated variants: 1)~\textbf{w/o QPR}, which replaces QPR with a classical linear scan and bypasses Bit-Vector Qubit Encoding; 2)~\textbf{w/o QSV}, which removes QSV and embeds support verification directly into the amplification loop; and 3)~\textbf{w/o QIL}, which replaces QIL with sequential classical listing without amplitude amplification.

The upper block of Table~\ref{tab:ablation_tx} reports time under varied transaction ratios $\tau$. Across all datasets, QFM consistently achieves the lowest time among the variants, showing that the three quantum modules jointly contribute to its efficiency. Among the variants, \textbf{w/o QSV} incurs the largest time, followed by \textbf{w/o QPR}, indicating that QSV is the most critical component, since it cleanly decouples candidate state preparation from support verification by constructing both \rB{the} MAC \rB{s}uperposition and the bit-parallel threshold-marking oracle. As $\tau$ increases, \textbf{w/o QSV} grows the fastest because support verification is no longer handled by the logarithmic-depth threshold-marking oracle; instead, the amplification loop inherits a linear-cost counting procedure. This confirms the role of QSV in reducing support verification to logarithmic depth with respect to the transaction count. \textbf{w/o QPR} also reports larger values on complex datasets such as \textit{Nursery}, since removing QPR eliminates early pruning and compact Bit-Vector Qubit Encoding before the candidate domain is prepared.

The lower block of Table~\ref{tab:ablation_ms} reports time under varied minimum-support ratios $\sigma$. The impact of removing QSV is most visible in the low-support regime, where support verification is invoked more frequently due to the larger candidate domain. In these settings, QFM remains efficient because QIL exploits the logarithmic-depth threshold-marking oracle prepared by QSV for support verification, whereas \textbf{w/o QSV} pays a higher verification cost per iteration. Similarly, the performance degradation of \textbf{w/o QIL} is more pronounced when $\sigma$ is small, because more frequent itemsets need to be extracted. This confirms the role of QIL in efficient pattern listing. Overall, the ablation results show that QPR, QSV, and QIL play distinct but complementary roles, whose combination enables QFM to achieve its full performance benefit.

\section{Conclusion}

In this paper, we explore Frequent Itemset Mining in the quantum-computing setting and propose \textit{\fullalgo\ (QFM)}. We first identify three quantum-specific challenges, including reversible data-access mismatch, candidate-domain encoding gap, and repeated reversible support-verification cost. To address these challenges, QFM orchestrates Bit-Vector Qubit Encoding, Mining-Aware Candidate Superposition, and Bit-Parallel Threshold Marking within a unified quantum mining framework. We further provide theoretical analysis in terms of time complexity and validate its implementation on IBM Qiskit and Amazon Braket. Extensive experiments on real-world benchmark datasets show that QFM consistently outperforms representative quantum, classical, and GPU baselines, demonstrating its effectiveness for frequent itemset mining. Future work will investigate further circuit-level optimizations for QFM and extend the framework to related mining tasks such as high-utility and uncertain itemset mining.

\clearpage
\section*{AI-Generated Content Acknowledgement}
ChatGPT (OpenAI), Claude Code (Anthropic), and Gemini (Google) were used during manuscript preparation for language editing, LaTeX reformatting, terminology consistency checks, and assistance in debugging or checking experiment scripts associated with Section~V. The authors reviewed and verified all technical statements, mathematical derivations, experimental procedures, tables, and conclusions and remain fully responsible for the submitted content.

\bibliographystyle{IEEEtran}
\bibliography{listing}

@article{litinski2019game,
  title   = {A Game of Surface Codes: Large-Scale Quantum Computing with Lattice Surgery},
  author  = {Daniel Litinski},
  journal = {Quantum},
  volume  = {3},
  pages   = {128},
  year    = {2019},
  doi     = {10.22331/q-2019-03-05-128}
}

@article{gidney2021factor,
  title   = {How to factor 2048 bit {RSA} integers in 8 hours using 20 million noisy qubits},
  author  = {Craig Gidney and Martin Eker{\aa}},
  journal = {Quantum},
  volume  = {5},
  pages   = {433},
  year    = {2021},
  doi     = {10.22331/q-2021-04-15-433}
}

@misc{krol2024assessing,
  title         = {Assessing the Requirements for Industry Relevant Quantum Computation},
  author        = {Anna M. Krol and Marvin Erdmann and Ewan Munro and Andre Luckow and Zaid Al-Ars},
  year          = {2024},
  eprint        = {2408.02587},
  archivePrefix = {arXiv},
  primaryClass  = {quant-ph},
  url           = {https://arxiv.org/abs/2408.02587}
}

@article{preskill2018nisq,
  author  = {John Preskill},
  title   = {Quantum Computing in the {NISQ} era and beyond},
  journal = {Quantum},
  volume  = {2},
  pages   = {79},
  year    = {2018},
  month   = aug,
  doi     = {10.22331/q-2018-08-06-79},
  url     = {https://doi.org/10.22331/q-2018-08-06-79}
}

@inproceedings{martin2020ciclad,
  author    = {Tomas Martin and Guy Francoeur and Petko Valtchev},
  title     = {CICLAD: A Fast and Memory-efficient Closed Itemset Miner for Streams},
  booktitle = {Proceedings of the 26th ACM SIGKDD International Conference on Knowledge Discovery \& Data Mining (KDD '20)},
  pages     = {1810--1818},
  year      = {2020},
  doi       = {10.1145/3394486.3403232}
}

@article{suchara2013overhead,
  author  = {Martin Suchara and Arvin Faruque and Ching{-}Yi Lai and Gerardo Paz and Frederic T. Chong and John Kubiatowicz},
  title   = {Comparing the Overhead of Topological and Concatenated Quantum Error Correction},
  journal = {arXiv preprint arXiv:1312.2316},
  year    = {2013},
  url     = {https://arxiv.org/abs/1312.2316}
}

@article{zaki2000scalable,
  author  = {Mohammed J. Zaki},
  title   = {Scalable Algorithms for Association Mining},
  journal = {IEEE Transactions on Knowledge and Data Engineering},
  volume  = {12},
  number  = {3},
  pages   = {372--390},
  year    = {2000},
  doi     = {10.1109/69.846291}
}

@inproceedings{han2000fp,
  author    = {Jiawei Han and Jian Pei and Yiwen Yin},
  title     = {Mining Frequent Patterns without Candidate Generation},
  booktitle = {Proceedings of the 2000 ACM SIGMOD International Conference on Management of Data (SIGMOD '00)},
  year      = {2000},
  pages     = {1--12},
  doi       = {10.1145/342009.335372}
}

@inproceedings{Agrawal94,
  author    = {Rakesh Agrawal and Ramakrishnan Srikant},
  title     = {Fast Algorithms for Mining Association Rules},
  booktitle = {Proceedings of the 20th International Conference on Very Large Data Bases (VLDB '94)},
  year      = {1994},
  pages     = {487--499},
  address   = {Santiago, Chile},
  publisher = {Morgan Kaufmann}
}

@article{FinanceReview2018,
  author = {Rom{\'a}n Or{\'u}s and Samuel Mugel and Enrique Lizaso},
  title = {Quantum Computing for Finance: Overview and Prospects},
  journal = {Reviews in Physics},
  volume = {4},
  pages = {100028},
  year = {2019},
  doi = {10.1016/j.revip.2019.100028}
}

@misc{QiskitCommunity2017,
  title = {Qiskit: An Open-Source Framework for Quantum Computing},
  author = {{Qiskit Community}},
  year = {2017},
  doi = {10.5281/zenodo.2562110},
  url = {https://github.com/Qiskit/qiskit}
}

@inproceedings{moens2013frequent,
  author    = {Sandy Moens and Emin Aksehirli and Bart Goethals},
  title     = {Frequent Itemset Mining for Big Data},
  booktitle = {2013 IEEE International Conference on Big Data},
  year      = {2013},
  pages     = {111--118},
  doi       = {10.1109/BigData.2013.6691742}
}

@inproceedings{ambainis2004quantum,
  title = {Quantum query algorithms and lower bounds},
  author = {Ambainis, Andris},
  booktitle = {Classical and New Paradigms of Computation and their Complexity Hierarchies},
  pages = {15--32},
  year = {2004},
  organization = {Springer}
}

@article{arute2019quantum,
  author  = {Frank Arute \emph{et al.}},
  title   = {Quantum Supremacy Using a Programmable Superconducting Processor},
  journal = {Nature},
  volume  = {574},
  pages   = {505--510},
  year    = {2019},
  doi     = {10.1038/s41586-019-1666-5}
}

@article{biamonte2017quantum,
  author  = {Biamonte, Jacob and Wittek, Peter and Pancotti, Nicola and Rebentrost, Patrick and Wiebe, Nathan and Lloyd, Seth},
  title   = {Quantum machine learning},
  journal = {Nature},
  volume  = {549},
  number  = {7671},
  pages   = {195--202},
  year    = {2017},
  doi     = {10.1038/nature23474}
}

@article{borgelt2023survey,
  author  = {Borgelt, Christian},
  title   = {Frequent item set mining},
  journal = {Wiley Interdisciplinary Reviews: Data Mining and Knowledge Discovery},
  volume  = {2},
  number  = {6},
  pages   = {437--456},
  year    = {2012},
  doi     = {10.1002/widm.1074}
}

@incollection{brassard2002quantum,
  author    = {Gilles Brassard and Peter H{\o}yer and Michele Mosca and Alain Tapp},
  title     = {Quantum Amplitude Amplification and Estimation},
  booktitle = {Quantum Computation and Information},
  series    = {Contemporary Mathematics},
  volume    = {305},
  pages     = {53--74},
  publisher = {American Mathematical Society},
  year      = {2002},
  doi       = {10.1090/conm/305/05215}
}

@article{buhrman2002complexity,
  title = {Complexity measures and decision tree complexity: a survey},
  author = {Buhrman, Harry and De Wolf, Ronald},
  journal = {Theoretical Computer Science},
  volume = {288},
  number = {1},
  pages = {21--43},
  year = {2002},
  publisher = {Elsevier}
}

@article{chen2002fpmax,
  title = {FPMax: Mining maximal frequent patterns without candidate generation},
  author = {Chen, Wen and others},
  journal = {IEEE Transactions on Knowledge and Data Engineering},
  volume = {14},
  number = {5},
  pages = {1002--1016},
  year = {2002},
  publisher = {IEEE}
}

@article{deutsch1985quantum,
  author  = {Deutsch, David},
  title   = {Quantum theory, the {Church--Turing} principle and the universal quantum computer},
  journal = {Proceedings of the Royal Society of London. Series A, Mathematical and Physical Sciences},
  volume  = {400},
  number  = {1818},
  pages   = {97--117},
  year    = {1985},
  doi     = {10.1098/rspa.1985.0070}
}

@misc{dong2023fpstreampp,
  author = {Giannella, Chris and Han, Jiawei and Pei, Jian and Yan, Xifeng and Yu, Philip S.},
  title  = {Mining Frequent Patterns in Data Streams at Multiple Time Granularities},
  year   = {2003},
  url    = {https://hanj.cs.illinois.edu/pdf/fpstm03.pdf}
}

@misc{dua2017uci,
  author       = {Dheeru Dua and Casey Graff},
  title        = {{UCI} Machine Learning Repository},
  howpublished = {\url{https://archive.ics.uci.edu/}},
  year         = {2017}
}

@inproceedings{duong2021gctree,
  author    = {Chen, Junbo and Li, Shanping},
  title     = {GC-tree: A Fast Online Algorithm for Mining Frequent Closed Itemsets},
  booktitle = {Advances in Knowledge Discovery and Data Mining (PAKDD 2007)},
  series    = {Lecture Notes in Computer Science},
  volume    = {4426},
  pages     = {457--468},
  year      = {2007},
  publisher = {Springer},
  doi       = {10.1007/978-3-540-77018-3_45}
}

@misc{farhi2014qaoa,
  author       = {Edward Farhi and Jeffrey Goldstone and Sam Gutmann},
  title        = {A Quantum Approximate Optimization Algorithm},
  howpublished = {arXiv:1411.4028},
  year         = {2014}
}

@article{feynman1982simulating,
  author = {Feynman, Richard P.},
  title = {Simulating physics with computers},
  journal = {International Journal of Theoretical Physics},
  volume = {21},
  number = {6-7},
  pages = {467-482},
  year = {1982}
}

@inproceedings{grover1996fast,
  author    = {Grover, Lov K.},
  title     = {A Fast Quantum Mechanical Algorithm for Database Search},
  booktitle = {Proceedings of the 28th Annual ACM Symposium on Theory of Computing (STOC '96)},
  year      = {1996},
  pages     = {212--219},
  publisher = {ACM},
  doi       = {10.1145/237814.237866}
}

@article{hamm2023tkde,
  author  = {Qu, Jianfeng and Fournier-Viger, Philippe and Liu, Mengjia and Hang, Bing and Hu, Cheng},
  title   = {Mining High Utility Itemsets Using Prefix Trees and Utility Vectors},
  journal = {IEEE Transactions on Knowledge and Data Engineering},
  volume  = {35},
  number  = {10},
  pages   = {10224--10236},
  year    = {2023},
  doi     = {10.1109/TKDE.2023.3256126}
}

@article{han2004fptree,
  author  = {Jiawei Han and Jian Pei and Yiwen Yin and Runying Mao},
  title   = {Mining Frequent Patterns without Candidate Generation: A Frequent-Pattern Tree Approach},
  journal = {Data Mining and Knowledge Discovery},
  volume  = {8},
  number  = {1},
  pages   = {53--87},
  year    = {2004}
}

@misc{ibmroadmap2024,
  author = {{IBM Quantum}},
  title  = {{IBM Quantum Roadmap}},
  year   = {2024},
  url    = {https://www.ibm.com/roadmaps/quantum/},
  note   = {Accessed 13 May 2025}
}

@article{iqbal2021tkfim,
  author  = {Iqbal, Saood and Shahid, Abdul and Roman, Muhammad and Khan, Zahid and Al-Otaibi, Shaha and Yu, Lisu},
  title   = {Tk-FIM: Top-k Frequent Itemset Mining with Deep Learning},
  journal = {PeerJ Computer Science},
  volume  = {7},
  pages   = {e717},
  year    = {2021},
  doi     = {10.7717/peerj-cs.717}
}

@article{kandala2017vqe,
  author  = {Abhinav Kandala \emph{et al.}},
  title   = {Hardware-Efficient Variational Quantum Eigensolver for Small Molecules and Quantum Magnets},
  journal = {Nature},
  volume  = {549},
  pages   = {242--246},
  year    = {2017},
  doi     = {10.1038/nature23879}
}

@inproceedings{li2008pfp,
  author    = {Haoyuan Li and Yi Wang and Dong Zhang and Ming Zhang and Edward Y. Chang},
  title     = {Parallel {FP}-Growth for Query Recommendation},
  booktitle = {Proceedings of the 2008 ACM Conference on Recommender Systems (RecSys '08)},
  year      = {2008},
  pages     = {107--114},
  doi       = {10.1145/1454008.1454027}
}

@inproceedings{liu2012fhm,
  author    = {Shuai-Ping Liu and Cheng-Wei Wu and Vincent S. Tseng},
  title     = {{FHM}: Faster High-Utility Itemset Mining Using Estimated Utility Co-occurrence Pruning},
  booktitle = {Proceedings of the 18th ACM SIGKDD International Conference on Knowledge Discovery and Data Mining},
  year      = {2012},
  pages     = {883--892}
}

@article{lucchese2005fast,
  title = {Fast and memory efficient mining of frequent closed itemsets},
  author = {Lucchese, Claudio and Orlando, Salvatore and Perego, Raffaele},
  journal = {IEEE Transactions on Knowledge and Data Engineering},
  volume = {18},
  number = {1},
  pages = {21--36},
  year = {2005},
  publisher = {IEEE}
}

@article{madsen2022borealis,
  author  = {Madsen, Lars S. and Laudenbach, Fabian and Falamarzi Askarani, Mohsen and others},
  title   = {Quantum computational advantage with a programmable photonic processor},
  journal = {Nature},
  volume  = {606},
  number  = {7912},
  pages   = {75--81},
  year    = {2022},
  doi     = {10.1038/s41586-022-04725-x}
}

@article{nguyen2020satfim,
  author  = {Henriques, Rui and Lynce, In{\^e}s and Manquinho, Vasco M.},
  title   = {On When and How to Use SAT to Mine Frequent Itemsets},
  journal = {arXiv preprint arXiv:1207.6253},
  year    = {2012},
  url     = {https://arxiv.org/abs/1207.6253}
}

@inproceedings{pasquier1999discovering,
  title = {Discovering frequent closed itemsets for association rules},
  author = {Pasquier, Nicolas and Bastide, Yves and Taouil, Rafik and Lakhal, Lotfi},
  booktitle = {Database Theory---ICDT'99: 7th International Conference Jerusalem, Israel, January 10--12, 1999 Proceedings 7},
  pages = {398--416},
  year = {1999},
  organization = {Springer}
}

@inproceedings{pei2001hmine,
  title = {H-Mine: Hyper-structure mining of frequent patterns in large databases},
  author = {Pei, Jian and others},
  booktitle = {Proceedings 2001 IEEE International Conference on Data Mining},
  pages = {441--448},
  year = {2001},
  organization = {IEEE}
}

@article{raj2020eafim,
  title = {EAFIM: efficient apriori-based frequent itemset mining algorithm on Spark for big transactional data},
  author = {Raj, Shashi and Ramesh, Dharavath and Sreenu, M and Sethi, Krishan Kumar},
  journal = {Knowledge and Information Systems},
  volume = {62},
  pages = {3565--3583},
  year = {2020},
  publisher = {Springer}
}

@article{rehman2022efficient,
  title = {Efficient Top-K Identical Frequent Itemsets Mining without Support Threshold Parameter from Transactional Datasets Produced by IoT-Based Smart Shopping Carts},
  author = {Rehman, Saif Ur and Alnazzawi, Noha and Ashraf, Jawad and Iqbal, Javed and Khan, Shafiullah},
  journal = {Sensors},
  volume = {22},
  number = {20},
  pages = {8063},
  year = {2022},
  publisher = {MDPI}
}

@inproceedings{shor1994algorithms,
  author    = {Shor, Peter W.},
  title     = {Algorithms for Quantum Computation: Discrete Logarithms and Factoring},
  booktitle = {Proceedings of the 35th Annual Symposium on Foundations of Computer Science (FOCS)},
  year      = {1994},
  pages     = {124--134},
  publisher = {IEEE},
  doi       = {10.1109/SFCS.1994.365700}
}

@article{singh2019sparkapriori,
  author  = {Singh, Pankaj and Singh, Sudhakar and Mishra, P. K. and Garg, Rakhi},
  title   = {A Data Structure Perspective to the RDD-based Apriori Algorithm on Spark},
  journal = {arXiv preprint arXiv:1908.01338},
  year    = {2019},
  url     = {https://arxiv.org/abs/1908.01338}
}

@inproceedings{uno2004lcm,
  title = {LCM ver. 2: Efficient mining algorithms for frequent/closed/maximal itemsets},
  author = {Uno, Takeaki and Kiyomi, Masashi and Arimura, Hiroki and others},
  booktitle = {Proceedings of the IEEE ICDM Workshop on Frequent Itemset Mining Implementations (FIMI)},
  year = {2004}
}

@article{vo2012dbv,
  title = {DBV-Miner: A Dynamic Bit-Vector approach for fast mining frequent closed itemsets},
  author = {Vo, Bay and Hong, Tzung-Pei and Le, Bac},
  journal = {Expert Systems with Applications},
  volume = {39},
  number = {8},
  pages = {7196--7206},
  year = {2012},
  publisher = {Elsevier}
}

@article{wan2024ptf,
  author  = {Wan, Xiaoyu and Liu, Qing and Li, Guoliang},
  title   = {Efficient Top-$k$ Frequent Itemset Mining on Massive Data},
  journal = {Data Science and Engineering},
  volume  = {9},
  pages   = {177--203},
  year    = {2024},
  doi     = {10.1007/s41019-024-00241-2}
}

@article{wright2019ionq,
  author  = {Kevin Wright and others},
  title   = {Benchmarking an 11-Qubit Quantum Computer},
  journal = {Nature Communications},
  volume  = {10},
  number  = {1},
  pages   = {5464},
  year    = {2019},
  doi     = {10.1038/s41467-019-13534-2}
}

@article{yu2016quantum,
  title = {Quantum algorithm for association rules mining},
  author = {Yu, Chao-Hua and others},
  journal = {Physical Review A},
  volume = {94},
  number = {4},
  pages = {042311},
  year = {2016},
  publisher = {APS}
}

@inproceedings{zaki2002charm,
  title = {CHARM: An efficient algorithm for closed itemset mining},
  author = {Zaki, Mohammed J and Hsiao, Ching-Jui},
  booktitle = {Proceedings of the 2002 SIAM International Conference on Data Mining},
  pages = {457--473},
  year = {2002},
  organization = {SIAM}
}

@article{zhang2021multi,
  title = {Multi-objective optimization for high-dimensional maximal frequent itemset mining},
  author = {Zhang, Yalong and Yu, Wei and Ma, Xuan and Ogura, Hisakazu and Ye, Dongfen},
  journal = {Applied Sciences},
  volume = {11},
  number = {19},
  pages = {8971},
  year = {2021},
  publisher = {MDPI}
}

@article{zhang2021right,
  title = {Right-hand side expanding algorithm for maximal frequent itemset mining},
  author = {Zhang, Yalong and Yu, Wei and Zhu, Qiuqin and Ma, Xuan and Ogura, Hisakazu},
  journal = {Applied Sciences},
  volume = {11},
  number = {21},
  pages = {10399},
  year = {2021},
  publisher = {MDPI}
}

@inproceedings{diaby2017sfpg,
  author    = {Amadou Diaby and Ibou Camara and Mamadou Ndiaye},
  title     = {S-FPG: An Efficient Parallel FP-Growth Algorithm under Apache Spark},
  booktitle = {Proceedings of the 2017 2nd International Conference on Cloud Computing and Big Data Analysis (ICCCBDA)},
  year      = {2017},
  pages     = {360--364},
  publisher = {IEEE},
  doi       = {10.1109/ICCCBDA.2017.7951891}
}

@article{zida2016efim,
  author  = {Soufiane Zida and Philippe Fournier-Viger and Jingwei Lin and Chengwei Wu and Vincent S. Tseng},
  title   = {EFIM: A Highly Efficient Algorithm for High-Utility Itemset Mining},
  journal = {Knowledge and Information Systems},
  volume  = {51},
  number  = {2},
  pages   = {595--625},
  year    = {2017},
  doi     = {10.1007/s10115-016-0986-0}
}

@article{wu2022ubpminer,
  author  = {Peng Wu and Xinzheng Niu and Philippe Fournier-Viger and Cheng Huang and Bing Wang},
  title   = {UBP-Miner: An Efficient Bit Based High Utility Itemset Mining Algorithm},
  journal = {Knowledge-Based Systems},
  volume  = {248},
  pages   = {108865},
  year    = {2022},
  doi     = {10.1016/j.knosys.2022.108865}
}

@inproceedings{leung2005cantree,
  title={CanTree: a tree structure for efficient incremental mining of frequent patterns},
  author={Leung, CK-S and Khan, Quamrul I and Hoque, Tariqul},
  booktitle={Fifth IEEE International Conference on Data Mining (ICDM'05)},
  pages={8--pp},
  year={2005},
  organization={IEEE}
}

@inproceedings{tanbeer2008cp,
  title={CP-tree: a tree structure for single-pass frequent pattern mining},
  author={Tanbeer, Syed Khairuzzaman and Ahmed, Chowdhury Farhan and Jeong, Byeong-Soo and Lee, Young-Koo},
  booktitle={Pacific-Asia Conference on Knowledge Discovery and Data Mining},
  pages={1022--1027},
  year={2008},
  organization={Springer}
}

@article{yu2022experimental,
  author  = {Yu, Chao-Hua},
  title   = {Experimental Implementation of Quantum Algorithm for Association Rules Mining},
  journal = {IEEE Journal on Emerging and Selected Topics in Circuits and Systems},
  volume  = {12},
  number  = {3},
  pages   = {676--684},
  year    = {2022},
  doi     = {10.1109/JETCAS.2022.3201097}
}

@article{fang2024gpu,
  author  = {Fang, Wei and Jiang, Haipeng and Lu, Hengyang and Sun, Jun and Wu, Xiaojun and Lin, Jerry Chun-Wei},
  title   = {GPU-Based Efficient Parallel Heuristic Algorithm for High-Utility Itemset Mining in Large Transaction Datasets},
  journal = {IEEE Transactions on Knowledge and Data Engineering},
  volume  = {36},
  number  = {2},
  pages   = {652--667},
  year    = {2024},
  doi     = {10.1109/TKDE.2023.3290371}
}

@article{boyer1998tight,
  title={Tight bounds on quantum searching},
  author={Boyer, Michel and Brassard, Gilles and H{\o}yer, Peter and Tapp, Alain},
  journal={Fortschritte der Physik: Progress of Physics},
  volume={46},
  number={4-5},
  pages={493--505},
  year={1998},
  publisher={Wiley Online Library}
}

@inproceedings{kosters2003complexity,
  author    = {Walter A. Kosters and Wim Pijls and Viara Popova},
  title     = {Complexity analysis of depth first and {FP}-growth implementations of {APRIORI}},
  booktitle = {Machine Learning and Data Mining in Pattern Recognition (MLDM)},
  series    = {Lecture Notes in Artificial Intelligence},
  volume    = {2734},
  pages     = {284--292},
  publisher = {Springer},
  year      = {2003}
}

@inproceedings{yang2004complexity,
  author    = {Guizhen Yang},
  title     = {The complexity of mining maximal frequent itemsets and maximal frequent patterns},
  booktitle = {Proceedings of the 10th ACM SIGKDD International Conference on Knowledge Discovery and Data Mining},
  pages     = {344--353},
  year      = {2004}
}

@article{georgopoulos2021noise,
  author  = {Konstantinos Georgopoulos and Clive Emary and Paolo Zuliani},
  title   = {Modeling and simulating the noisy behavior of near-term quantum computers},
  journal = {Physical Review A},
  volume  = {104},
  number  = {6},
  pages   = {062432},
  year    = {2021}
}

@misc{online,
      title={Frequent Itemset Mining with Quantum Computing (Full Version)},
      author={Yen-Hsin Hsu and Ya-Wen Teng and De-Nian Yang and Wang-Chien Lee and Philip S. Yu and Ming-Syan Chen},
      year={2026},
      eprint={2606.09209},
      archivePrefix={arXiv},
      primaryClass={cs.DB},
      url={https://arxiv.org/abs/2606.09209}, 
}

@article{bennett1973logical,
  author  = {Bennett, Charles H.},
  title   = {Logical Reversibility of Computation},
  journal = {IBM Journal of Research and Development},
  volume  = {17},
  number  = {6},
  pages   = {525--532},
  year    = {1973}
}

@book{nielsen2010quantum,
  author    = {Nielsen, Michael A. and Chuang, Isaac L.},
  title     = {Quantum Computation and Quantum Information},
  publisher = {Cambridge University Press},
  address   = {Cambridge, UK},
  edition   = {10th Anniversary},
  year      = {2010}
}

@article{fournier2014spmf,
  author  = {Fournier-Viger, Philippe and Gomariz, Antonio and Gueniche, Ted and Soltani, Azadeh and Wu, Cheng-Wei and Tseng, Vincent S.},
  title   = {{SPMF}: A {Java} Open-Source Pattern Mining Library},
  journal = {Journal of Machine Learning Research},
  volume  = {15},
  pages   = {3389--3393},
  year    = {2014},
  note    = {Datasets: \url{https://www.philippe-fournier-viger.com/spmf/index.php?link=datasets.php}}
}

@misc{ibm_quantum_hardware,
  author       = {{IBM Quantum}},
  title        = {{Hardware for useful quantum computing}},
  howpublished = {\url{https://www.ibm.com/quantum/hardware}},
  note         = {Accessed: 2026-06-11}
}

@misc{microsoft_atom_computing_2024,
  author       = {Svore, Krysta},
  title        = {{Microsoft and Atom Computing offer a commercial quantum machine with the largest number of entangled logical qubits on record}},
  howpublished = {\url{https://azure.microsoft.com/en-us/blog/quantum/2024/11/19/microsoft-and-atom-computing-offer-a-commercial-quantum-machine-with-the-largest-number-of-entangled-logical-qubits-on-record/}},
  month        = nov,
  year         = {2024},
  note         = {Accessed: 2026-06-11}
}

@misc{google_willow_quantum_chip_2024,
  author       = {Neven, Hartmut},
  title        = {{Meet Willow, our state-of-the-art quantum chip}},
  howpublished = {\url{https://blog.google/innovation-and-ai/technology/research/google-willow-quantum-chip/}},
  month        = dec,
  year         = {2024},
  note         = {Accessed: 2026-06-11}
}

\clearpage

\ifincludeappendix
\appendices

\section{Pseudocode and Notations}\label{sec:table}\label{sec:codes}
Procedures~\ref{alg:QPR} and~\ref{alg:QIL} detail the pseudocode of Quantum Preprocessing and Representation (QPR)\footnote{QPR uses bounded-error support estimates only for safe singleton pruning. A singleton is pruned by estimation only when its certified interval lies entirely below the support threshold \(\sigma\); if the interval intersects the threshold, its support is resolved exactly by popcounting the materialized transaction-indicator vector before \(L_1\) is finalized; this step is still covered by the \(\mathcal{O}(NM)\) preprocessing term.} and Quantum Itemset Listing (QIL)\footnote{After measurement returns a newly listed frequent itemset \(c=u\cup w\), its transaction-indicator vector \(v_c=v_u\wedge v_w\) is materialized from the parent vectors and retained. The next level can then load these vectors directly instead of rescanning the database.}, respectively.
Table~\ref{tab:notation} summarizes the notations used in this paper.


\section{Comparative Complexity Analysis}\label{sec:comparison}

In this section, we compare the computational complexity and operational characteristics of QFM against prior FIM algorithms. Table~\ref{tab:comparison} summarizes representative methods. 

\textbf{Apriori-like Algorithms.}
Apriori-like methods generate candidate itemsets level by level and then count their occurrences in the database. In the worst case, the number of explored candidates grows exponentially with the maximum itemset length: level $r$ examines up to $\binom{N}{r}$ candidates, and checking one level-$r$ candidate against one transaction costs $\Theta(r)$ item comparisons, giving the strict transaction-dependent verification term $\mathcal{O}(M\sum_{r\le\hat{k}}r\binom{N}{r})$ in addition to database preprocessing, while join-and-prune candidate generation adds \(\mathcal{O}(\sum_{r\le\hat{k}}r^{2}\binom{N}{r})\), following standard Apriori-family complexity accounting~\cite{kosters2003complexity}.
Although quantum counting can reduce the per-candidate dependence on \(M\), transferring this paradigm into quantum computing still leaves the explicit candidate-generation loop outside the quantum procedure and does not provide level-wise state reuse; qARM~\cite{yu2016quantum} exemplifies this hybrid design, as it retains the classical candidate-generation loop. 
Consequently, qARM 
does not remove the classical candidate loop or reuse verified itemset states across levels.

\textbf{FP-growth-like Algorithms.}
FP-growth compresses transactions into an FP-tree and avoids explicit candidate generation in many practical cases~\cite{han2004fptree}. Nevertheless, in the query-count model of~\cite{kosters2003complexity}, each queried pair \((A,j)\), where \(j\) is larger than the largest item in \(A\), corresponds to the unique \((|A|{+}1)\)-itemset \(A\cup\{j\}\). Hence, level \(r\le\hat{k}{-}1\) contributes at most \(\binom{N}{r+1}\) pairs, each requiring at most \(M\) membership queries. After including preprocessing and output-writing costs, the worst-case mining cost is bounded by the strict term \(\mathcal{O}(NM+M\sum_{r=2}^{\hat{k}}\binom{N}{r}+\sum_{r\le\hat{k}}r\binom{N}{r})\) in Table~\ref{tab:comparison}, which follows the worst-case closure of the exact query-count formula of~\cite{kosters2003complexity} together with the cost of writing the output itemsets.\footnote{The two index forms are one substitution apart: parent level \(r\) contributes at most \(\binom{N}{r+1}\) pairs, and summing over \(r\le\hat{k}{-}1\) with \(s=r{+}1\) gives \(\sum_{s=2}^{\hat{k}}\binom{N}{s}\), written with index \(r\) in Table~\ref{tab:comparison}.}
These exponential worst-case terms reflect the output size of the problem itself. When all nonempty itemsets are frequent, the explicit output contains \(2^{N}{-}1\) itemsets and \(\sum_{r\le N}r\binom{N}{r}=N2^{N-1}\) item identifiers, so every exact algorithm that lists all frequent itemsets inherits an exponential output-size lower bound.\footnote{Counting the maximal frequent itemsets is \#P-complete~\cite{yang2004complexity}. Implementations differ in the additional non-output candidates or projections they explore and in their per-pattern processing cost, but none avoids this output-scale term.}
Although these structures explain the practical efficiency of FP-growth, they also make the paradigm difficult to transfer into quantum computing. The workflow relies on tree construction, projected databases, and recursive pointer traversal, whose data-dependent and destructive updates do not naturally support reversible threshold marking for support verification.

\begingroup
\floatname{algorithm}{Procedure}
\begin{algorithm}[t]
  \caption{\rB{Quantum Preprocessing and Representation (QPR)}}
  \label{alg:QPR}
  \begin{algorithmic}[1]
    \Require transaction database $D$, threshold $\sigma$
    \Ensure frequent 1-itemsets $L_1$ 
    \State Construct incidence matrix $X\in\{0,1\}^{N\times M}$ where $X_{i,t}=1$ iff transaction $t$ contains item $i$
    \State $L_1\gets\emptyset$ 
    \For{each item $i$}
        \State Define $\mathcal O^{(i)}_{\sf count}: |t,b\rangle \mapsto |t,b\oplus X_{i,t}\rangle$
        \State Obtain support $s_i$ with $\mathcal O^{(i)}_{\sf count}$
        \If{$s_i \ge \sigma$}
            \State $L_1\gets L_1\cup\{i\}$
        \EndIf
    \EndFor
    \State \Return $L_1$ 
  \end{algorithmic}
\end{algorithm}
\endgroup

\begingroup
\floatname{algorithm}{Procedure}
\begin{algorithm}[t]
  \caption{\rB{Quantum Itemset Listing (QIL)}}
  \label{alg:QIL}
  \begin{algorithmic}[1]
    \Require candidate preparation $U_{\rm cand}^{(k+1)}$, threshold oracle $\mathcal O^{(k+1)}_{\ge\sigma}$
    \Ensure frequent candidates $L_{k+1}$
    \State Determine (or upper-bound) the number of marked candidates using $\mathcal O^{(k+1)}_{\ge\sigma}$
    \State $L_{k+1}\gets\emptyset$, $E\gets\emptyset$ \Comment{$E$ is an exclusion set}
    \While{more marked candidates remain}
        \State Run amplitude amplification using $U_{\rm cand}^{(k+1)}$, $\mathcal O^{(k+1)}_{\ge\sigma}$, and exclusion set $E$
        \State Measure a candidate $c$
        \If{$c\notin E$}
            \State $L_{k+1}\gets L_{k+1}\cup\{c\}$; $E\gets E\cup\{c\}$
        \EndIf
    \EndWhile
    \State \Return $L_{k+1}$
  \end{algorithmic}
\end{algorithm}
\endgroup

\begin{table*}[h]
\centering
\small
\caption{Summary of Notations.}
\label{tab:notation}
\begin{tabularx}{\textwidth}{@{} l X @{}}
\toprule
\textbf{Symbol} & \textbf{Description} \\
\midrule
\multicolumn{2}{@{}l}{\textit{Data and FIM symbols}} \\ \addlinespace[0.5ex]
$D$ & Transaction database, represented as an $N\times M$ binary incidence matrix $X\in\{0,1\}^{N\times M}$. \\
$X_{i,j}$ & Entry of $X$: $1$ if item $i$ appears in transaction $j$, and $0$ otherwise. \\
$N$ & Number of distinct items. \\
$M$ & Number of transactions. \\
$i$ & Index of a single item. \\
$I$ & Candidate itemset, i.e., a nonempty subset of the item universe. \\
$v_I\in\{0,1\}^M$ & Transaction-indicator vector of itemset $I$, where $v_I[j]=1$ iff $I\subseteq T_j$. \\
$\operatorname{sup}(I)$ & Support count of $I$, equal to $\|v_I\|_1$. \\
$\sigma$ & Minimum support threshold. \\
$L_k$ & Set of frequent $k$-itemsets extracted after QIL. \\
$C_k$ & Set of candidate $k$-itemsets considered at level $k$. \\
$P_k$ & Prefix-sharing parent-pair set used by \rB{QSV} to generate candidates at level $k+1$. \\
$\hat{k}$ & \rY{Largest itemset length whose candidate domain is explored; the terminal level may yield no frequent output.} \\
$B_{\hat{k}}$ & \rY{Total number of level-wise candidates explored by QFM} up to length $\hat{k}$, i.e., $\sum_{k=1}^{\hat{k}-1}|C_{k+1}|$. \\
\rG{$Q_{\hat{k}}$} & \rY{Total number of amplitude-amplification rounds over levels}\rG{, i.e., $\sum_{k=1}^{\hat{k}-1}\sqrt{|C_{k+1}|\,\max\{1,|L_{k+1}|\}}$; the output-sensitive refinement of $B_{\hat{k}}$, satisfying $Q_{\hat{k}}\le B_{\hat{k}}$.} \\
\rB{QPR} & \rB{Quantum Preprocessing and Representation}; initialization stage that identifies frequent 1-itemsets. \\
\rB{QSV} & \rB{Quantum Support Verification}; support-verification stage that prepares the superposition over valid candidates and threshold-marking oracle. \\
\rB{QIL} & Quantum Itemset Listing; extraction stage that lists the candidates marked as frequent. \\
\midrule
\multicolumn{2}{@{}l}{\textit{Oracle symbols}} \\ \addlinespace[0.5ex]
$\mathcal{O}_{\sf count}$ & Counting oracle used in QPR for item-support estimation and bit-vector construction. \\
$\mathcal{O}_{\sf load}^{(k)}$ & Transaction load oracle at level $k$. \\
$\mathcal{O}_{\ge\sigma}^{(k)}$ & Threshold-marking oracle at level $k$. \\
$U_{\rm cand}^{(k)}$ &  Superposition preparation unitary at level $k$. \\

\bottomrule
\end{tabularx}
\end{table*}

\textbf{QFM.}
We analyze QFM from the perspectives of time complexity and resource footprint. In terms of time complexity, QFM has cost $\mathcal{O}(NM+Q_{\hat{k}}(\log B_{\hat{k}}+\log M))$, where \(Q_{\hat{k}}\) denotes the total number of amplitude-amplification rounds aggregated over levels and serves as an output-sensitive refinement of the explored candidate count \(B_{\hat{k}}\), with \(Q_{\hat{k}}\le B_{\hat{k}}\). This expression keeps the remaining candidate-growth dependence explicit while showing that transaction-dependent support verification enters through the logarithmic-depth threshold-marking oracle.\footnote{If the inter-level materialization of retained transaction-indicator vectors is charged as sequential bit-level work, it contributes an additional output-sensitive term \(\mathcal{O}(M\sum_k |L_k|)\).}

\begin{table*}[t]
\centering
\scriptsize\renewcommand{\arraystretch}{0.92}
\caption[Complexity comparison of Apriori, qARM, FP-growth, and QFM.]{Comparison of Apriori, \rB{qARM}, FP-growth, and QFM in terms of time complexity, space complexity, and operational characteristics.}
\label{tab:comparison}
\begin{tabularx}{\textwidth}{@{}l>{\raggedright\arraybackslash}X>{\raggedright\arraybackslash}X>{\raggedright\arraybackslash}X>{\raggedright\arraybackslash}X@{}}
\toprule
\textbf{Category} & \textbf{Apriori (Traditional)\rY{~\cite{Agrawal94}}} & \rB{\textbf{qARM~\cite{yu2016quantum,yu2022experimental}}} & \textbf{FP-growth\rY{~\cite{han2000fp,han2004fptree}}} & \textbf{QFM} \\ \midrule
\textbf{Storage} & Original Database & \rB{Database via an assumed black-box access oracle 
} & FP-Tree & Transaction-indicator vectors \\ \midrule
\textbf{Time complexity} &
\rY{\(\mathcal{O}(NM+M\sum_{r\le\hat{k}}r\tbinom{N}{r}\allowbreak+\sum_{r\le\hat{k}}r^{2}\tbinom{N}{r})\)} &
\rY{\(\mathcal{O}(\sqrt{\sigma(M{-}\sigma{+}1)}\allowbreak\sum_{k}(k{+}1)\allowbreak\sqrt{|C_{k+1}|\max\{1,|L_{k+1}|\}})\) oracle queries for \rY{bounded-error evaluation of the exact threshold predicate} (see notes)} &
\rY{\(\mathcal{O}(NM+M\sum_{r=2}^{\hat{k}}\tbinom{N}{r}\allowbreak+\sum_{r\le\hat{k}}r\tbinom{N}{r})\)} &
\(\rR{\mathcal{O}(NM+Q_{\hat{k}}(\log B_{\hat{k}}+\log M))}\) \\
\textbf{Space complexity} &
\rY{\(\mathcal{O}(NM+\hat{k}\max_{r\le\hat{k}}\tbinom{N}{r})\)} &
\rB{Not analyzed at circuit level (implementation reported for $2{\times}2$ and $4{\times}4$ databases~\cite{yu2022experimental})} &
\(\mathcal{O}(NM)\) &
\rY{\(\mathcal{O}(\log |C_{k+1}|{+}M)\)} workspace qubits per oracle call \\ \midrule 
\textbf{Pros} &
Simple implementation &
\rB{Parallel amplitude estimation over all candidates} &
Efficient representation for shared prefixes &
State reuse and shallow support verification \\
\textbf{Cons} &
Inefficient with many candidates &
\rB{Classical candidate generation between levels; approximate thresholding; no reuse across levels} &
Tree construction and recursion overhead &
Requires quantum circuit resources \\ \bottomrule
\end{tabularx}
\vspace{0pt}
\begin{minipage}{0.96\textwidth}\scriptsize\rY{%
Notes: $\hat{k}$ denotes \rY{the largest itemset length whose candidate domain is explored (the terminal level may yield no frequent output)}. For QFM, $B_{\hat{k}}=\sum_{k=1}^{\hat{k}-1}|C_{k+1}|$ is the \rY{total number of explored level-wise candidates} and $Q_{\hat{k}}=\sum_{k=1}^{\hat{k}-1}\sqrt{|C_{k+1}|\max\{1,|L_{k+1}|\}}\le B_{\hat{k}}$ is \rY{the total number of amplification rounds over levels (its output-sensitive refinement)}. The binomial sums are strict $\hat{k}$-parameterized worst-case pattern-growth bounds for the classical baselines\rY{, derived in the text of this appendix (following~\cite{kosters2003complexity}); the FP-growth space entry follows the FP-tree size bound of at most one node per frequent-item occurrence~\cite{han2004fptree}}. Classical entries count RAM operations under the item-comparison containment model, where a level-$r$ candidate contributes a $\Theta(r)$ containment factor; the Apriori cell's third term is join-and-prune candidate generation, dominated by the counting term when $M=\Omega(N)$. \rY{At $\hat{k}=N$, the Apriori entry is $\Theta(NM2^{N})$ when $M=\Omega(N)$, via $\sum_{r\le N}r\binom{N}{r}=N2^{N-1}$ \rY{and $\sum_{r\le N}r^{2}\binom{N}{r}=N(N{+}1)2^{N-2}$}; the FP-growth entry reduces to $\Theta(M2^{N})$ membership queries plus $\Theta(N2^{N-1})$ output writes, i.e., $\mathcal{O}((M{+}N)2^{N})$. The FP-growth time entry counts database membership queries in the model of~\cite{kosters2003complexity} plus output writes. The Apriori space cell reports peak level-wise working storage; the FP-growth cell reports the resident FP-tree only, with the full depth-first recursion chain reaching up to a $\hat{k}$-factor more unless conditional structures are released and rebuilt; retained output is excluded throughout.}

The qARM entry instantiates the published per-level query bound $\mathcal{O}(\sum_k (k{+}1)\sqrt{|C_{k+1}||L_{k+1}|}/\epsilon)$ under the \rY{bounded-error exact-threshold} guarantee used for comparison with QFM. Boundary classification requires estimating support fractions $a=\operatorname{sup}(I)/M$ to additive error $\mathcal{O}(1/M)$; amplitude estimation gives the stated $\sqrt{\sigma(M{-}\sigma{+}1)}$ factor. QFM's verification cost is exact and logarithmic in $M$ by Lemma~\ref{lem:qsv_depth}. 
As published, qARM fixes one precision $\epsilon$ for all candidates: exact classification then forces $1/\epsilon=\Theta(M)$ once any support is near $M/2$, while its default constant-$\epsilon$ setting yields only $\epsilon$-approximate thresholding, a weaker guarantee than the exact decisions compared here.}%
\end{minipage}
\end{table*}



We next derive another form of the output-sensitive bound for QFM by expressing the explored candidate growth through two run-dependent quantities: the maximum level-wise frequent set size and the maximum number of frequent extensions sharing the same prefix. These quantities describe the structure encountered during the actual mining process, rather than additional inputs to QFM.\footnote{Independently of these run-dependent quantities, Theorem~\ref{thm:complexity_fim} implies a fully general time bound. Since \(B_{\hat{k}}\le\sum_{k=2}^{\hat{k}}\binom{N}{k}=\mathcal{O}(2^N)\) and \(Q_{\hat{k}}\le B_{\hat{k}}\), the same accounting gives \(\mathcal{O}(NM+N2^N)\), using \(2^N\log M\le N2^N+M\).}

\begin{corollary}[Bounded candidate growth]
\label{cor:bounded_level}
\rY{For any database and threshold, let $B=\max_k |L_k|$. Under the fixed item order of Section~\ref{subsec:QFc}, the \emph{prefix} of a $k$-itemset is the set of its first $k{-}1$ items; let $d$ be the maximum number of frequent $k$-itemsets sharing the same prefix, over all explored levels. Under the time accounting of Theorem~\ref{thm:complexity_fim}, QFM runs in}
\begin{equation*}
\rY{\mathcal{O}\left(NM+\hat{k}dB(\log(dB)+\log M)\right).}
\end{equation*}
\end{corollary}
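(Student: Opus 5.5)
The plan is to start from Theorem~\ref{thm:complexity_fim} in its relaxed form $\mathcal{O}(NM+B_{\hat{k}}(\log B_{\hat{k}}+\log M))$. Since $Q_{\hat{k}}\le B_{\hat{k}}$, it suffices to bound $B_{\hat{k}}=\sum_{k=1}^{\hat{k}-1}|C_{k+1}|$ by $\mathcal{O}(\hat{k}dB)$. The result then follows by substitution and by monotonicity of $x\mapsto x(\log x+\log M)$, after one further step: replacing the factor $\log B_{\hat{k}}$ by $\log(dB)$. This works because $\log(\hat{k}dB)=\log\hat{k}+\log(dB)$, and it remains to absorb $\log\hat{k}$.

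\textbf{Key combinatorial step.} I will bound a single level, $|C_{k+1}|\le d\,|L_k|\le dB$. By the construction of $P_k$ in Section~\ref{subsec:QFc}, each candidate $(k{+}1)$-itemset corresponds to exactly one prefix-sharing pair $(u,w)\in P_k$. In this pair $u$ and $w$ share their first $k{-}1$ items, so $|C_{k+1}|=|P_k|$. Partition $L_k$ into prefix classes, each of size at most $d$ by the definition of $d$. A class of size $s$ contributes $\binom{s}{2}\le s\cdot d/2$ pairs. Summing over the classes gives $|P_k|\le d\,|L_k|/2\le dB$. Summing over the levels $k=1,\dots,\hat{k}-1$ then gives $B_{\hat{k}}\le \hat{k}dB$.

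For $k=1$ the prefix is empty, so every frequent $1$-itemset lies in the same class and $d\ge |L_1|$. The bound $|C_2|\le\binom{|L_1|}{2}\le d|L_1|/2$ is therefore still consistent, and no case split is needed.

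\textbf{Main obstacle: the logarithmic factor.} The difficulty is the logarithm, not the leading count. Lemma~\ref{lem:qil_complexity} charges $\log|C_{k+1}|+\log|L_k|+\log M$ per oracle iteration. Instead of passing through the global $\log B_{\hat{k}}$, which would introduce an extra $\log\hat{k}$, I would go back to the per-level sum in the proof of Theorem~\ref{thm:complexity_fim}. There, $\log|C_{k+1}|\le\log(dB)$ and $\log|L_k|\le\log B\le\log(dB)$ hold at each level. The level-$k$ term is then at most $\sqrt{|C_{k+1}|\max\{1,|L_{k+1}|\}}\,(\log(dB)+\log M)\le |C_{k+1}|(\log(dB)+\log M)$, which is at most $dB(\log(dB)+\log M)$.

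The one remaining case is $|C_{k+1}|=0$ at the terminal level, where the $\max\{1,\cdot\}$ term contributes $\mathcal{O}(1)$. That $\mathcal{O}(1)$ is absorbed, since $B\ge1$ whenever a level is explored. Summing the $\hat{k}-1$ levels and adding the $\mathcal{O}(NM)$ QPR cost gives the claimed $\mathcal{O}(NM+\hat{k}dB(\log(dB)+\log M))$.
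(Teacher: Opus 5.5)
Your proof is correct. Its combinatorial core is the same as the paper's: you partition $L_k$ into prefix classes, use that each candidate arises from exactly one same-class pair, and bound $\binom{m_p}{2}\le\frac{d-1}{2}m_p$ (you use the looser $m_p d/2$). This gives $|C_{k+1}|=\mathcal{O}(dB)$ per level and $B_{\hat{k}}=\mathcal{O}(\hat{k}dB)$.

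Where you differ is the logarithmic factor. The paper uses Theorem~\ref{thm:complexity_fim} only through its stated bound, so it must show $\log B_{\hat{k}}=\mathcal{O}(\log(dB))$. It absorbs the $\log\hat{k}$ you worried about with a downward-closure argument. A frequent $(\hat{k}-1)$-itemset (or $\hat{k}$-itemset, if the last level is nonempty) has that many frequent singleton subsets, so $\hat{k}\le|L_1|+1\le B+1$. It also uses $d\ge|L_1|\ge2$ to keep $\log(dB)\ge1$. Your abandoned first plan via monotonicity would be completed by exactly this step.

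You instead reopen the per-level sum in the theorem's proof. There you bound each level's factor $\log|C_{k+1}|+\log|L_k|+\log M$ locally by $2\log(dB)+\log M$, so no bound on $\hat{k}$ is needed. This is legitimate because the corollary is stated under the theorem's time accounting. Your route is more local and does not use downward closure at all. The paper's route has the advantage of treating the theorem as a black box.

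One small slip concerns the $\max\{1,\cdot\}$ term. It matters when $L_{k+1}=\emptyset$ but $C_{k+1}\neq\emptyset$, and your inequality $\sqrt{|C_{k+1}|\max\{1,|L_{k+1}|\}}\le|C_{k+1}|$ already covers that case. If $C_{k+1}=\emptyset$, the term is $0$ rather than $\mathcal{O}(1)$. In any case QSV is only entered when $P_k\neq\emptyset$, so there is nothing extra to absorb.
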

\begin{proof}
\rY{For each level $k$, each $k$-itemset has exactly one prefix (its first $k{-}1$ items under the fixed order), so the prefix classes are pairwise disjoint and partition $L_k$; e.g., under the order $1<2<3$, the frequent $2$-itemsets $\{1,2\}$ and $\{1,3\}$ lie in the class of prefix $\{1\}$ and join to the candidate $\{1,2,3\}$, whereas $\{1,3\}$ and $\{2,3\}$ have different prefixes and are never joined. QSV forms candidates only by joining parent itemsets in the same prefix class, and each candidate $(k{+}1)$-itemset arises from exactly one pair in exactly one class, so no pair is counted twice. If a class contains $m_p$ itemsets, it contributes $\binom{m_p}{2}$ parent pairs. Since $m_p\le d$,}
\begin{equation*}
\rY{\binom{m_p}{2}\le \frac{d-1}{2}m_p.}
\end{equation*}
\rY{Summing over prefix classes gives}
\begin{equation*}
\rY{|C_{k+1}|\le \frac{d-1}{2}|L_k|\le \frac{d-1}{2}B.}
\end{equation*}
\rY{Therefore $B_{\hat{k}}=\sum_{k=1}^{\hat{k}-1}|C_{k+1}|\le \frac{\hat{k}(d-1)}{2}B$. Because $Q_{\hat{k}}\le B_{\hat{k}}$, Theorem~\ref{thm:complexity_fim} gives}
\begin{equation*}
\rY{\mathcal{O}\left(NM+\hat{k}dB(\log B_{\hat{k}}+\log M)\right).}
\end{equation*}
\rY{Finally, $\log B_{\hat{k}}=\mathcal{O}(\log(dB))$. If the terminal explored level yields no frequent output, then $L_{\hat{k}-1}\neq\emptyset$ and a frequent $(\hat{k}{-}1)$-itemset has $\hat{k}{-}1$ frequent singleton subsets, so $\hat{k}-1\le|L_1|$; otherwise $L_{\hat{k}}\neq\emptyset$ and $\hat{k}\le|L_1|$ directly. In either case $\hat{k}\le |L_1|+1\le B+1$ and $\log(\hat{k}d)=\mathcal{O}(\log(dB))$. Note that whenever any candidate level is explored, the empty prefix already admits $|L_1|$ frequent extensions, so $d\ge|L_1|\ge2$ and $\log(dB)\ge1$; otherwise the iterative term vanishes and the bound holds trivially. The corollary follows.}
\end{proof}

In terms of resource footprint, we analyze the space complexity of QFM from the active-workspace perspective.\footnote{The classical persistent storage consists mainly of retained transaction-indicator vectors and output and bookkeeping metadata. For the level-wise procedure, QFM retains the transaction-indicator vectors of the current parent level; with level-wise eviction, this requires \(\mathcal{O}(M\max_k |L_k|)\) classical bits. If an implementation retains all levels for later re-mining or audit, the storage becomes \(\mathcal{O}(M\sum_k |L_k|)\) bits. QFM also maintains metadata, such as itemset descriptors for prefix-sharing joins and exclusion indices used by QIL, e.g., \(\mathcal{O}(\max_k k|L_k|)\) item identifiers or \(\mathcal{O}(\max_k |L_k|\log B_{\hat{k}})\) bits when stored as candidate indices. These metadata terms are dominated by the \(M\)-bit transaction-vector storage when \(M\) is at least the descriptor width.}

\rY{\begin{theorem}[Space complexity of QFM]
\label{thm:space_fim}
At level $k{+}1$, one call of the QSV threshold-marking oracle uses
\begin{equation*}
\mathcal{O}(\log |C_{k+1}|+M)
\end{equation*}
active workspace qubits. These registers are uncomputed at the end of the call, so this active footprint is independent of the number of oracle invocations. 
\end{theorem}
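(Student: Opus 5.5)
The plan is to count qubits register by register along the decomposition
\[
\mathcal{O}^{(k+1)}_{\ge\sigma}=U_{\rm load}^{\dagger}U_{\wedge}^{\dagger}U_{\rm pop}^{\dagger}U_{\rm cmp}^{\dagger}X_{\rm flag}U_{\rm cmp}U_{\rm pop}U_{\wedge}U_{\rm load},
\]
following the layout of Figure~\ref{fig:qsv_circuit}. Then I add the ancillas needed by each block, and finally use the palindromic structure to argue that everything except the index and flag is returned to $\ket{0}$.

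\textbf{Step 1: data registers.} The CSI index register has $q_k=\lceil\log_2|P_k|\rceil$ qubits. Each pair in $P_k$ gives exactly one candidate, so $|P_k|=|C_{k+1}|$ and this register costs $\mathcal{O}(\log|C_{k+1}|)$. The flag is one qubit. The registers for $v_u$, $v_w$ and $\wedge$ are three $M$-qubit registers, so they cost $\mathcal{O}(M)$. The support register $s$ needs $\lceil\log_2(M+1)\rceil$ qubits.

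\textbf{Step 2: ancillas inside the blocks.} This is the step I expect to be the main obstacle, because the logarithmic-depth constructions of Lemma~\ref{lem:qsv_depth} must not quietly use superlinear space.

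For the load block, the $\mathcal{O}(\log S)$-depth selection circuit addressing the level-$k$ table needs decoding ancillas for the index bits. A unary or QROM-style decoder over the $|L_k|$ entries could take $\mathcal{O}(|L_k|)$ qubits. I would first try to charge this to the index width. The load is controlled by the index register, so the addressing can go through the pair index $r$ itself, decoded bit by bit with $\mathcal{O}(q_k)$ ancillas. If that is not convincing, I would read the footnote to Lemma~\ref{lem:qsv_depth} as saying that the cache is classical and only its $\mathcal{O}(\log S)$ addressing workspace is counted as active qubits. Since $\log|L_k|\le\mathcal{O}(\log|C_{k+1}|+1)$ on explored levels, that addressing workspace is absorbed into the stated bound.

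For the popcount tree, level $\ell$ has $M/2^\ell$ partial sums of $\ell+1$ bits each. The total is $\sum_\ell (\ell+1)M/2^\ell=\mathcal{O}(M)$. Parallel-prefix adders add at most a constant factor per level.

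For the comparator against the classical constant $\sigma$, I need $\mathcal{O}(\log M)$ ancillas.

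Adding these up gives $\mathcal{O}(\log|C_{k+1}|+M)$.

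\textbf{Step 3: uncomputation.} Each of $U_{\rm load}$, $U_\wedge$, $U_{\rm pop}$ and $U_{\rm cmp}$ is a reversible map that acts on workspace initialized to zero, conditioned on the index. $X_{\rm flag}$ touches only the flag and reads the comparator bit. So applying the inverse blocks in reverse order returns every workspace register to $\ket{0}$, independently on each index branch of the superposition, leaving only the phase on marked candidates. Because of this, the same qubits are reused in the next call. The peak footprint per call is therefore the maximum found in Steps 1 and 2, and it does not grow with the number of oracle invocations made by QIL.
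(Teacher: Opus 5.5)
Your overall route is the paper's. Its proof lists the CSI register ($\mathcal{O}(\log|C_{k+1}|)$ qubits), a constant number of $M$-qubit transaction-vector registers, $\mathcal{O}(\log M)$-qubit counter and comparator registers, and $\mathcal{O}(M)$ reduction-tree ancillas, and then appeals to the inverse blocks for uncomputation. Your Steps 1 and 3 and your sum $\sum_\ell(\ell+1)M/2^\ell=\mathcal{O}(M)$ match it. The paper never charges any addressing or parent-index qubits for $\mathcal{O}^{(k)}_{\sf load}$. Your Step 2 treatment of the load block therefore goes beyond the paper, and that is where one of your two options fails.

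\textbf{Option (b) is wrong.} It rests on $\log|L_k|=\mathcal{O}(\log|C_{k+1}|+1)$ on explored levels, and this is false.
\begin{itemize}
\item Take $\sigma=1$, one transaction $\{1,2,3\}$, and $J$ transactions $\{2j{+}2,2j{+}3\}$. Level $2$ is explored with $|C_3|=1$, since only the prefix class of item $1$ has two members. Yet $|L_2|=J+3$.
\item The $M$ term cannot absorb it either. With a single transaction containing all $N$ items and $\sigma=1$, level $N{-}1$ has $|L_{N-1}|=N$, $|C_N|=1$ and $M=1$.
\end{itemize}
So any $\Theta(\log|L_k|)$-qubit register breaks the stated bound. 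This includes an addressing workspace over the level-$k$ table, and also separate parent indices $u,w$ fed to $\mathcal{O}^{(k)}_{\sf load}:\ket{x,z}\mapsto\ket{x,z\oplus v_x}$.

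\textbf{Option (a) is the correct repair.} Address the load directly by the CSI value $r$, i.e.\ use a table of $|P_k|=|C_{k+1}|$ entries $r\mapsto(v_{u_r},v_{w_r})$. Addressing then costs $\mathcal{O}(q_k)$ qubits, and no comparison between $|L_k|$ and $|C_{k+1}|$ is needed. The same indexing also works under the paper's convention of an uncharged lookup with only $\mathcal{O}(\log S)$ addressing cost.

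\textbf{The depth caveat.} The explicit $\mathcal{O}(q_k)$-ancilla decoder you describe iterates over all $|P_k|$ entries. Its depth therefore grows with $|P_k|$, not as the $\mathcal{O}(\log|L_k|)$ of Lemma~\ref{lem:qsv_depth}. A simple gate-counting argument over a finite gate set shows this cannot be avoided at the gate level. A circuit of width $\mathcal{O}(\log|C_{k+1}|+M)$ and polylogarithmic depth has only $\widetilde{\mathcal{O}}(M)$ gates. It cannot hardwire the $|L_1|M$ arbitrary bits of the level-$1$ table, e.g.\ when $|L_1|=M$. What you actually prove is the space bound for a space-efficient realization of the oracle. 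The reconciliation with logarithmic depth that you called the main obstacle is not available. The paper gets both bounds only because it does not charge the cache hardware.
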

\begin{proof}
The active registers in a marking call are the CSI register, the transaction-indicator workspace, the support-counting workspace, and a constant number of flag qubits. The CSI register uses \rY{$\mathcal{O}(\log |C_{k+1}|)$} qubits. The oracle loads the two parent transaction-indicator vectors and computes their bitwise intersection, using \rY{a constant number of $M$-qubit transaction-vector registers ($\mathcal{O}(M)$ qubits in total)}. The final support counter and comparator use \rY{$\mathcal{O}(\log M)$-qubit registers}; any reduction-tree ancillas are $\mathcal{O}(M)$ and are absorbed into the $M$-scale workspace term.
After comparison, the inverse load, intersection, popcount, and comparison blocks uncompute all temporary registers. 
This proves the stated space bound.
\end{proof}}

\begin{figure*}[t]
\centering
{\revisioncolor{yellow!50!black}%
\begin{tikzpicture}
\begin{axis}[
    ymode=log,
    log basis y={10},
    width=0.90\textwidth,
    height=5.3cm,
    ymin=0.05,
    ymax=120000,
    ylabel={Time (s, log scale)},
    xmin=-0.75,
    xmax=11.55, 
    xtick={0,2.6,5.2,7.8,10.4},
    xticklabels={Accidents,Connect-4,Pumsb,Pumsb-star,Chess},
    xticklabel style={font=\scriptsize, rotate=12, anchor=east, yshift=-2mm},
    ymajorgrids=true,
    grid style={dashed,gray!35},
    legend style={at={(0.5,1.12)}, anchor=south, legend columns=6, draw=none, font=\scriptsize},
    tick label style={font=\scriptsize},
    label style={font=\scriptsize},
    axis on top=false,
]
\addlegendimage{area legend, draw=black, fill=black!8}\addlegendentry{QFM}
\addlegendimage{area legend, draw=black, fill=black!28}\addlegendentry{qARM}
\addlegendimage{area legend, draw=black, fill=black!45}\addlegendentry{FP-growth}
\addlegendimage{area legend, draw=black, fill=black!60}\addlegendentry{HAMM}
\addlegendimage{area legend, draw=black, fill=black!75}\addlegendentry{Eclat}
\addlegendimage{area legend, draw=rkR, fill=black!92}\addlegendentry{PHA-HUIM} 

\def\figbase{0.05}
\newcommand{\timebar}[4]{\filldraw[fill=#4, draw=black, line width=0.30pt] (axis cs:#1,\figbase) rectangle (axis cs:#2,#3);}

\draw[densely dashed, gray!38] (axis cs:1.30,\figbase) -- (axis cs:1.30,120000);
\draw[densely dashed, gray!38] (axis cs:3.90,\figbase) -- (axis cs:3.90,120000);
\draw[densely dashed, gray!38] (axis cs:6.50,\figbase) -- (axis cs:6.50,120000);
\draw[densely dashed, gray!38] (axis cs:9.10,\figbase) -- (axis cs:9.10,120000);

\timebar{-0.50}{-0.34}{51.71}{black!8}
\timebar{-0.25}{-0.09}{24162.49}{black!28}
\timebar{0.00}{0.16}{169.12}{black!45}
\timebar{0.25}{0.41}{197.53}{black!60}
\timebar{0.50}{0.66}{30525.81}{black!75}
\filldraw[fill=black!92, draw=rkR, line width=0.5pt] (axis cs:0.75,\figbase) rectangle (axis cs:0.91,147.51); 

\timebar{2.10}{2.26}{11.43}{black!8}
\timebar{2.35}{2.51}{5481.65}{black!28}
\timebar{2.60}{2.76}{45.69}{black!45}
\timebar{2.85}{3.01}{31.60}{black!60}
\timebar{3.10}{3.26}{7509.78}{black!75}
\filldraw[fill=black!92, draw=rkR, line width=0.5pt] (axis cs:3.35,\figbase) rectangle (axis cs:3.51,22.417); 

\timebar{4.70}{4.86}{61.62}{black!8}
\timebar{4.95}{5.11}{37464.10}{black!28}
\timebar{5.20}{5.36}{206.04}{black!45}
\timebar{5.45}{5.61}{202.93}{black!60}
\timebar{5.70}{5.86}{36112.53}{black!75}
\filldraw[fill=black!92, draw=rkR, line width=0.5pt] (axis cs:5.95,\figbase) rectangle (axis cs:6.11,98); 

\timebar{7.30}{7.46}{3.741}{black!8}
\timebar{7.55}{7.71}{622.75}{black!28}
\timebar{7.80}{7.96}{8.238}{black!45}
\timebar{8.05}{8.21}{5.125}{black!60}
\timebar{8.30}{8.46}{471.60}{black!75}
\filldraw[fill=black!92, draw=rkR, line width=0.5pt] (axis cs:8.55,\figbase) rectangle (axis cs:8.71,4.577); 

\timebar{9.90}{10.06}{0.105}{black!8}
\timebar{10.15}{10.31}{38.03}{black!28}
\timebar{10.40}{10.56}{1.329}{black!45}
\timebar{10.65}{10.81}{2.283}{black!60}
\timebar{10.90}{11.06}{6.124}{black!75}
\filldraw[fill=black!92, draw=rkR, line width=0.5pt] (axis cs:11.15,\figbase) rectangle (axis cs:11.31,0.439); 
\end{axis}
\end{tikzpicture}%
}

\caption{\rR{Time comparison on all large datasets (extending Figure~\ref{fig:large_time_rebuttal} with \textit{Pumsb-star}).}}
\label{fig:large_time_full}
\end{figure*}

\section{Connection to Classical Mining Paradigms}\label{app:eclat_mapping}
In this section, we discuss how QFM can be extended to support quantum counterparts of classical candidate-based, vertical-data, and tree-based mining algorithms. 

\textbf{Candidate-based mining.}
A direct quantum adaptation of Apriori-style mining would keep the classical level-wise join-and-prune loop and apply quantum search or quantum counting to each generated candidate set. This can reduce the support-counting dependence on the number of transactions, but the candidate-generation loop remains outside the quantum procedure, and the support-counting logic still needs to be embedded inside repeated amplification iterations. Such repeated embedding increases the circuit cost of each level and weakens the benefit obtained from quantum search. \rY{At the other extreme, a flat quantum listing over the full itemset lattice has domain size $2^{N}$ and requires $\widetilde{\mathcal{O}}(\sqrt{2^{N}\max\{1,|L|\}})$ marking-oracle calls; restricting the search to $(k{+}1)$-itemsets reduces the domain to $\binom{N}{k+1}$ but still ignores downward-closure pruning.} In QFM, by contrast, the level-wise candidate structure is instead represented directly by MAC Superposition\rY{, which amplifies only the data-dependent candidate domain $C_{k+1}$ induced by prefix-sharing frequent parents}. The threshold-marking oracle \(\mathcal{O}_{\ge\sigma}\) serves as the reusable marker inside QIL, so support verification is factored into one reversible oracle structure that can be invoked repeatedly during amplification without rebuilding the support-counting logic.

\textbf{Vertical-data mining.}
From the vertical-data perspective used by Eclat, each itemset is represented by a TID-set, and support evaluation is performed by intersecting TID-sets and counting the resulting size. Directly implementing these TID-set intersections as ordinary set operations is not well aligned with reversible circuit execution, since intermediate intersection results must be computed and then cleared without losing reversibility. In contrast, QFM provides a natural quantum-compatible representation through materialized transaction-indicator vectors. In this representation, TID-set intersection corresponds to reversible bit-vector AND, and support counting corresponds to popcount followed by threshold comparison inside \(\mathcal{O}_{\ge\sigma}\). Thus, the Eclat-style support test becomes a reusable reversible oracle over level-wise candidate states, followed by QIL extraction.

\textbf{Tree-based mining.}
FP-growth-style mining obtains much of its practical efficiency from prefix sharing through a compressed FP-tree, projected databases, and recursive traversal. However, the pointer-based traversal and in-place updates used by this workflow do not map naturally to reversible, uncomputation-friendly quantum circuits. By constrast, QFM realizes the corresponding prefix-sharing structure without relying on recursive tree traversal. The prefix-sharing joins is captured by CSI used in MAC Superposition, while the corresponding transaction information is represented by retained transaction-indicator vectors. As a result, QFM replaces the tree-based notion of prefix sharing through level-wise candidate superposition and reversible threshold marking, rather than through FP-tree construction and recursive traversal.


\section{Additional Experimental Results}\label{sec:additional_exp}

\subsection{Time Comparison on \textit{Pumsb-star}}

Figure~\ref{fig:large_time_full} reports the time comparison for all datasets, including \textit{Pumsb-star}. The result on \textit{Pumsb-star} is consistent with the other datasets. QFM remains the fastest method, followed by PHA-HUIM and HAMM, while qARM and Eclat are several orders of magnitude slower.

\subsection{Additional Quantum Resource Usage Comparison}

Table~\ref{tab:yu_quantum_resource_full} extends Table~\ref{tab:yu_quantum_resource} by including the peak logical-qubit footprint. QFM substantially reduces both logical depth and logical gate count compared with qARM, while using the same peak number of logical qubits on both datasets. Thus, the reductions in depth and gate count are not obtained by trading for additional logical-qubit space. Instead, QFM produces circuits that are both shallower and lighter than those of qARM without requiring more logical qubits.

\subsection{Validation on Real Quantum Hardware}
\label{subsec:real_qpu}

To empirically validate the feasibility on physical quantum hardware, we conducted experiments on Amazon Braket. We focus the physical-QPU validation on QFM because this experiment is intended to validate executable quantum circuits under real hardware constraints. qARM cannot be deployed on Amazon Braket since it follows a hybrid workflow with candidate generation performed on classical machines via the level-wise join-and-prune loop. In other words, qARM is not a total quantum solution. 
Due to current hardware constraints,\footnote{For example, IonQ Forte 1 provides \rY{only} up to 36 algorithmic qubits (see \url{https://www.ionq.com/quantum-systems/compare}), which is below the \(39\)--\(43\) peak active logical qubits required by QFM on large datasets, as shown in Table~\ref{tab:yu_quantum_resource_full}.} a small-scale subset was randomly sampled from the \textit{Mushroom}, \textit{Tic-Tac-Toe}, and \textit{Car} datasets for evaluation. 
Table~\ref{tab:braket_qpu_summary} summarizes the physical-QPU validation of QFM on Amazon Braket IonQ Forte 1. Across the QPU runs on the three datasets, the executed workloads use 8--9 logical qubits, reach maximum depths ranging from 410 to 1017, and achieve 100\% correctness in identifying itemsets that satisfy the support threshold; each run was executed with 100 shots, within the device's per-task gate-shot quota.\footnote{The gate-shot quota is a hardware-service limit on how large a submitted task can be, accounting for both circuit size and the number of repeated executions.} These results provide a real quantum hardware validation of QFM under current device constraints.

\begin{table}[t]
\centering
\small
\setlength{\tabcolsep}{2pt}
\caption{\rR{Quantum resource usage comparison (extending Table~\ref{tab:yu_quantum_resource} with the peak \rB{active} logical-qubit footprint, denoted as \rB{Act.\ LQ}).}}
\label{tab:yu_quantum_resource_full}
\begin{tabular}{lrrrrrr}
\toprule
\multirow{2}{*}{Method}
& \multicolumn{3}{c}{\textit{Bike}}
& \multicolumn{3}{c}{\textit{Skin}} \\
\cmidrule(lr){2-4} \cmidrule(lr){5-7}
& \rB{Act.\ LQ} & Depth & \#Gates
& \rB{Act.\ LQ} & Depth & \#Gates \\
\midrule
QFM  & 39 & 1460       & 14760       & 43 & 1820       & 13720 \\
qARM & 39 & $>30000$ & $>30000$ & 43 & $>30000$ & $>30000$ \\
\bottomrule
\end{tabular}
\end{table}

\begin{table}[t]
\centering
\setlength{\tabcolsep}{3pt}
\caption{Physical-QPU validation of QFM on Amazon Braket. Correctness indicates whether QFM reports itemsets that satisfy the support threshold.}
\label{tab:braket_qpu_summary}
\resizebox{\columnwidth}{!}{
\begin{tabular}{lcccc}
\toprule
Backend & QPU runs & Logical qubits & Max depth & Correctness \\
\midrule
IonQ Forte 1 (\textit{Mushroom}) & 11 & 8--9 & 927 & 100\% \\
IonQ Forte 1 (\textit{Tic-Tac-Toe}) & 11 & 8 & 410 & 100\% \\
IonQ Forte 1 (\textit{Car}) & 11 & 8--9 & 1017 & 100\% \\
\bottomrule
\end{tabular}}
\end{table}

To complement the correctness validation, we also examine the physical-QPU execution time across these sampled QPU runs under controlled changes to the sampled data volume and the minimum support threshold. The reported percentages are relative to the original sampled configuration with \(\sigma=2\). The observed trend is consistent with the simulator-based results. For \textit{Mushroom}, \textit{Tic-Tac-Toe}, and \textit{Car}, reducing the sampled data volume by one third sharply lowers the execution time to below \(1\%\) of the original configuration, whereas increasing it by one third leads to a moderate increase to about \(118\%\), reflecting the larger circuit depth required to encode and verify a larger transaction set. The minimum-support threshold shows the opposite trend. Relaxing the threshold to \(\sigma=1\) increases the execution time to about \(107\%\), since more candidates satisfy the support condition and require extraction. Tightening the threshold to \(\sigma=3\) prunes the search space more aggressively, reducing the execution time to about \(29\%\). Overall, the hardware measurements follow similar qualitative behavior observed in simulation: the execution time increases with the sampled data volume, while higher support thresholds reduce the number of surviving candidates and lower the overall execution time.

\subsection{Error Measurement under Simulated Hardware Noise}
\label{subsec:qsv_error}

The 100\% correctness observed on IonQ Forte 1 is consistent with its trapped-ion hardware characteristics, including high gate fidelity and all-to-all connectivity, which help reduce accumulated gate error for the tested circuits.\footnote{Detailed information about IonQ Forte 1 can be found at \url{https://www.ionq.com/quantum-systems/forte}.} Therefore, in addition to the physical-QPU validation, following~\cite{georgopoulos2021noise}, we further evaluate how much hardware noise QFM can tolerate by running a broader set of tested instances on IBM Qiskit Aer with a controlled, increasing amount of hardware error. Each noise level is chosen to represent a class of devices, ranging from low-noise trapped-ion machines (e.g., IonQ) to noisier superconducting ones (e.g., Rigetti).\footnote{On current quantum hardware the two-qubit entangling gates dominate the error budget, so the noise level we vary is the two-qubit gate error rate; the noise model combines a depolarizing part (gate error) with a readout part (measurement error).} For each noise level, we report the correctness of the frequent/infrequent decisions. 

\begin{table}[t]\centering\footnotesize\setlength{\tabcolsep}{2pt}
\caption{Error measurement of QFM under Qiskit Aer noise models with increasing two-qubit gate error (depolarizing \(+\) readout). The noise levels are representative settings for sensitivity analysis rather than measured calibration values of specific hardware devices.}

\label{tab:qsv_error}
\begin{tabular}{lcc}
\toprule
Setting & 2-qubit err & Correct \\ 
\midrule
Aer, ideal                  & \(0\)     & $100\%$ \\ 
Aer, trapped-ion-class (e.g., IonQ Forte)      & \(0.006\) & $99.7\%$ \\ 
Aer, generic                & \(0.010\) & $99.0\%$ \\ 
Aer, stress & \(0.015\) & $92.7\%$ \\ 
\bottomrule
\end{tabular} 
\end{table}

Table~\ref{tab:qsv_error} summarizes the correctness 
under the simulated noise settings. QFM achieves near-100\% correctness in the noiseless, trapped-ion-class, and generic settings, consistent with the IonQ Forte~1 hardware validation. As the two-qubit error rate increases, 
the decision quality degrades accordingly. The correctness remains at or above \(99.0\%\) up to the generic setting, and QFM still retains \(92.7\%\) correctness under the conservative stress setting. Overall, QFM remains reliable in the low-two-qubit-error regime matching the physical-QPU platform used in our validation, while its decision quality degrades gradually as the two-qubit gate error increases.\footnote{For the QIL-stage candidate-verification and readout errors considered here, QFM handles them at the QIL extraction stage through verification and repeated extraction. Each returned itemset is checked against the retained parent transaction-indicator vectors before insertion into \(L_{k+1}\). Thus, a noisy marking or readout event that produces an invalid, padded, or infrequent itemset is rejected rather than reported, and only verified positives update the exclusion set. Missed frequent itemsets can be mitigated by rerunning QIL on the remaining search space with a larger shot or extraction budget.}

\fi

\end{document}